\newcommand{\Hom}{\mathbf{H}}
\newcommand{\Cycles}{\mathbf{Z}}
\newcommand{\Boundaries}{\mathbf{B}}
\newcommand{\Chains}{\mathbf{C}}
\newcommand{\C}{\mathbb{C}}
\newcommand{\co}{\colon\thinspace}
\newcommand{\tri}{\mathfrak{T}}
\newcommand{\M}{M}
\newcommand{\tv}{\mathrm{TV}}
\newcommand{\adm}{\mathrm{Adm}}
\newcommand{\Z}{\mathbb{Z}}
\newcommand{\zmin}{z^-}
\newcommand{\zmax}{z^+}
\newcommand{\Q}{\mathbb{Q}}
\newcommand{\X}{\mathbb{X}}
\newcommand{\Y}{\mathbb{Y}}
\newcommand{\disk}{\mathbb{D}}
\newcommand{\intsymb}[3]{ [#1 \ \ #2 \ \ #3] }
\newcommand{\mcg}{\mathcal{MCG}}
\newcommand{\id}{\mathrm{id}}
\newcommand{\SL}{sl}
\begin{document}

\title{Admissible colourings of 3-manifold triangulations for Turaev-Viro type 
invariants}

\author{Cl\'ement Maria \and Jonathan Spreer
  }

\institute{The University of Queensland, Australia\\
   \email{c.maria@uq.edu.au, j.spreer@uq.edu.au}
}

\maketitle

\begin{abstract}
Turaev Viro invariants are amongst the most powerful tools to distinguish
$3$-manifolds: They are implemented in mathematical software, and allow 
practical computations. The invariants can be computed purely combinatorially by 
enumerating colourings on the edges of a triangulation $\tri$. 

These edge colourings can be interpreted as embeddings of surfaces in $\tri$. 
We give a characterisation of how these embedded surfaces intersect with 
the tetrahedra of $\tri$. This is done by characterising isotopy classes of 
simple closed loops in the $3$-punctured disk. As a direct result we obtain a 
new system of coordinates for edge colourings which allows for simpler 
definitions of the tetrahedron weights incorporated in the Turaev-Viro 
invariants. 

Moreover, building on a detailed analysis of the colourings, as well as
classical work due to Kirby and Melvin, Matveev, and others, we show that 
considering a much smaller set of colourings suffices to compute
Turaev-Viro invariants in certain significant cases. This results in a 
substantial improvement of running times to compute the invariants, 
reducing the number of colourings to consider by a factor of $2^n$. In addition,
we present an algorithm to compute Turaev-Viro invariants of degree four -- a 
problem known to be $\#P$-hard -- which capitalises on the combinatorial structure
of the input.

The improved algorithms are shown to be optimal in the following sense: There 
exist triangulations admitting all colourings the algorithms consider. 
Furthermore, we demonstrate that our new algorithms to compute Turaev-Viro 
invariants are able to distinguish the majority of $\Z$-homology spheres with 
complexity up to $11$ in $O(2^n)$ operations in $\Q$.
\end{abstract}

\medskip
\noindent
{\bf Keywords}: {geometric topology, triangulations of $3$-manifolds, Turaev-Viro invariants, combinatorial algorithms}

\section{Introduction}
\label{sec:intro}

In geometric topology, recognising the topological type of a given manifold,
i.e., testing if two manifolds are equivalent, is one of the most fundamental 
algorithmic problems. In fact, the task of comparing the topology of two given 
manifolds often stands at the very beginning of a question, and solving 
it is essential for conducting research in the field.

Depending on the dimension of the manifolds, this task is remarkably
difficult to solve in general. There exists an algorithmic solution in 
dimension three due to Perelman's proof of the geometrisation 
conjecture \cite{kleiner08-perelman}, but it is highly theoretical in
nature and has never been implemented. Moreover, in dimensions $\geq 5$ the 
problem becomes undecidable \cite{Stillwell93Undecidability}.

As a result, comparing the topology of two manifolds in dimension $\geq 3$
often requires human-machine interactions, combining various
strategies. In practice, these are largely of two distinct types: (i) 
computing {\em invariants} in order to prove that two given manifolds are 
distinct; and (ii) trying to establish a certificate that two 
given manifolds are homeomorphic.

Here we focus on the former type of methods, more precisely, on a particularly 
powerful family of invariants for $3$-manifolds, called the {\em Turaev-Viro 
invariants} \cite{turaev92-invariants}. These are parameterised by two integers 
$r$ and $q$, with $r \geq 3$, $0 < q < 2r$, and denoted by $\tv_{r,q}$. They 
derive from quantum field theory but can be computed by purely combinatorial 
means---much like the famous Jones polynomial for knots. Implementations exists
for $3$-manifolds represented by (i) {\em spines} ($2$-dimensional skeletons of 
$3$-manifolds) in Matveev's \emph{Manifold Recogniser} 
\cite{matveev03-algms,recogniser}; and (ii) {\em triangulations} (a 
list of tetrahedra with their triangular faces glued together in pairs) in
Burton's software \emph{Regina} \cite{regina,Burton15TuraevViro}. 

We work within the latter setting, namely {\em 
triangulations of $3$-manifolds} $\tri$, where the definition of the Turaev-Viro 
invariant $\tv_{r,q}$ is based on \emph{admissible colourings} of the edges of 
$\tri$ with $r-1$ distinct colours. Each admissible colouring defines a {\em 
weight}, and $\tv_{r,q} (\tri)$ equals the sum of these weights. The naive 
implementation of this procedure is simple, efficient in memory, but has worst 
case running time $(r-1)^{n+v}$, where $n$ is the number of tetrahedra and $v$ 
the number of vertices of $\tri$. More recently, Burton and the authors 
introduced a {\em fixed parameter tractable (FPT)} algorithm which is {\em 
linear} in $n$, and only singly exponential in the treewidth of the dual graph of 
$\tri$ \cite{Burton15TuraevViro}.

%
In this article, we study admissible colourings of $3$-manifold triangulations
with the aim of a better understanding of Turaev-Viro invariants and significant
algorithmic improvements.

Admissible colourings can be interpreted as surfaces \emph{embedded}
in a triangulated $3$-manifold, where the colour of an edge corresponds to (half) 
its number of intersections with the surface, and the surface intersects the
triangles of the triangulation in straight arcs. Embedded surfaces play a
vital role in $3$-manifold topology, most notably due to Haken's theory of
\emph{normal surfaces}, i.e. embedded surfaces intersecting the 
tetrahedra of a triangulation in a collection of triangles and quadrilaterals 
\cite{haken61-knot}. Surfaces of critical importance to the topology of a 
manifold, such as a disk bounding an unknot in a triangulation of a 
knot-complement, can be proven to occur as a normal surface. For other problems,
such as recognising the $3$-sphere, surfaces of slightly more general types have
to be considered \cite{rubinstein97-3sphere}. Surfaces coming from 
admissible colourings contain all these surface types and many more (depending 
on the value of $r$). See recent work by Bachman \cite{Bachmann12Helical} for a 
study of such {\em surfaces of arbitrary index} from a topological point of 
view. This illustrates the potential power of the Turaev-Viro 
invariants in distinguishing between non-homeomorphic $3$-manifolds, based on 
purely combinatorial objects. 

We present a classification of surface types defined by admissible
colourings in form of isotopy classes of simple closed loops in the 
$3$-punctured disk (as a model of the tetrahedron). 
We give a combinatorial characterisation of this 
bijection using intersection numbers of the surface with the six edges 
of a tetrahedron. As an application of this characterisation we transform and 
simplify the formulae for the tetrahedra weights for $\tv_{r,q}$, 
in terms of the surface pieces intersecting a tetrahedron.

Moreover, we build on work by Kirby and 
Melvin~\cite{kirby91-witten} 
and Matveev~\cite{matveev03-algms} 
to bound the
number of admissible colourings relative to the size, the number of vertices, 
and the first Betti number of a triangulation. In particular, we prove sharp 
upper bounds on the number of admissible colourings which are much smaller than 
the trivial upper bound, and which are strongest 
for triangulations of homology spheres with only one vertex. 
In addition, we obtain a new upper bound for the size of $\adm (\tri, 4)$
depending on the specific structure of the input triangulation, which is sharp
in many cases. Note that this is particularly interesting since computing 
$\tv_{4,1}$ is a $\#P$-hard problem~\cite{Burton15TuraevViro,kirby04-nphard}.

We use these bounds together with classical results from $3$-manifold topology 
to obtain a significant exponential speed-up of the computation of some 
Turaev-Viro invariants. In particular, we reduce running times of the naive 
enumeration procedure from $(r-1)^{n+v}$ to $O(\lceil (r-1)/2 \rceil^{n+1})$
for $r$ odd and $q=1$; 
and describe an enumeration procedure to compute 
$\tv_{4,1}$ which is shown to be 
near-optimal in most cases of small $3$-manifold triangulations.


Note that the improved algorithms still have exponential running times. 
However, we give experimental evidence that the reduction in the base of the 
exponent is of practical significance for
triangulations of intermediate size
, and $4 \leq r \leq 9$. 
This range of values is highly relevant for major applications such as building 
censuses of minimal triangulations.

\section{Background}
\label{sec:bg}

\subsection{Manifolds, triangulations, and (co-)homology groups}
\label{ssec:trigs}

Let $M$ be a closed $3$-manifold. A \emph{generalised triangulation} $\tri$ of 
$M$ is a collection of $n$ abstract tetrahedra $\Delta_1,\ldots,\Delta_n$ 
together with $2n$ {\em gluing maps} identifying their $4n$ triangular faces 
in pairs, such that the underlying topological space is homeomorphic to $M$.

As a consequence of the gluings, vertices, edges or triangles of 
the same tetrahedron may be identified. Indeed, it is common in practical 
applications to have a \emph{one-vertex triangulation}, in which all vertices of
all tetrahedra are identified to a single point. We refer to an equivalence 
class defined by the gluing maps as a single \emph{face of the triangulation}.

Generalised triangulations are widely used in $3$-manifold topology. They are 
closely related, but more general, than simplicial complexes: every simplicial
complex triangulating a manifold is a generalised triangulation, and some 
subdivision of a generalised triangulation is always a simplicial
complex.

Let $\tri$ be a generalised $3$-manifold triangulation. For the field of 
coefficients 
$\Z_2 := \Z / 2\Z$, the {\em group of $p$-chains}, $0 \leq p \leq 3$, denoted 
$\Chains_p(\tri,\Z_2)$, of $\tri$ is the group of formal sums of $p$-faces with 
$\Z_2$ coefficients. The \emph{boundary operator} is a linear operator 
$\partial_p: \Chains_p(\tri,\Z_2) \rightarrow 
\Chains_{p-1}(\tri,\Z_2)$ such that $\partial_p \sigma = \partial_p \{v_0, 
\cdots , v_p\} = \sum_{j=0}^p \{v_0,\cdots ,\widehat{v_j}, \cdots,v_p\}$,
where $\sigma$ is a face of $\tri$, $\{v_0, \ldots, v_p\}$ represents 
$\sigma$ as a face of a tetrahedron of $\tri$ in local vertices $v_0, 
\ldots, v_p$, and $\widehat{v_j}$ means $v_j$ is deleted from the list. Denote 
by $\Cycles_p(\tri,\Z_2)$ and $\Boundaries_{p-1}(\tri,\Z_2)$ the kernel and the 
image of $\partial_p$ respectively. Observing $\partial_p \circ 
\partial_{p+1}=0$, we define the {\em $p$-th homology group} $\Hom_p(\tri,\Z_2)$ of 
$\tri$ by the quotient $\Hom_p(\tri,\Z_2) = \Cycles_p(\tri,\Z_2)/
\Boundaries_p(\tri,\Z_2)$. These structures are vector spaces. Informally, the $p$-th homology group, $0 \leq p \leq 3$, of a generalised 
triangulation $\tri$ counts the number of ``$p$-dimensional holes'' in $\tri$. 

The concept of {\em cohomology} is in many ways dual to homology, but more 
abstract and endowed with more algebraic structure. It is defined in the 
following way: The {\em group of $p$-cochains} $\Chains^p(\tri,\Z_2)$ is the
formal sum of linear maps of $p$-faces of $\tri$ into $\Z_2$. The 
\emph{coboundary operator} is a linear operator $\delta^p: 
\Chains^{p-1}(\tri,\Z_2) \rightarrow \Chains^{p}(\tri,\Z_2)$ such that 
for all $\phi \in \Chains^{p-1}(\tri,\Z_2)$ we have $\delta^p (\phi) = 
\phi \circ \partial_p$. As above, {\em $p$-cocycles} are the elements in the
kernel of $\delta^{p+1}$, {\em $p$-coboundaries} are elements in the image
of $\delta^{p}$, and the \emph{$p$-th cohomology group} 
$\Hom^p(\tri,\Z_2)$ is defined as the $p$-cocycles factored by the 
$d$-coboundaries.

The exact correspondence between elements of homology and cohomology is best
illustrated by {\em Poincar\'e duality} stating that for closed $d$-manifold 
triangulations $\tri$, $\Hom^p(\tri,\Z_2)$ and $\Hom_{d-p}(\tri,\Z_2)$ are
dual as vector spaces. More precisely, let $S$ be a $2$-cycle in $\tri$ 
representing a class in $\Hom_{2}(\tri,\Z_2)$.
We can perturb $S$ such that it contains no vertex of $\tri$ and intersects
every tetrahedron of $\tri$ in a single triangle (separating one vertex from the
other three) or a single quadrilateral (separating pairs of vertices).
It follows that every edge of $\tri$ intersects $S$ in $0$ or $1$ points.
Then the $1$-cochain defined by mapping every edge intersecting $S$ to $1$ and
mapping all other edges to $0$ represents the Poincar\'e dual of $S$ in 
$ \Hom^1(\tri,\Z_2) $.

In this article we will mostly consider the first cohomology group $\Hom^1(\tri,\Z_2)$ 
---a $\Z_2$-vector space of dimension called the {\em first Betti number
 $\beta_1(\tri,\Z_2)$ of $\tri$}.
(Co)homology groups can be computed on a triangulation in polynomial time. 
For a more detailed introduction into (co)homology theory see 
\cite{hatcher02-algebraic}.

\subsection{Turaev-Viro type invariants}
\label{ssec:tv}

Let $\tri$ be a generalised triangulation of a closed $3$-manifold $\M$,
and let $r \geq 3$, be an integer. Let $V$, $E$, $F$ and $T$ denote the set
of vertices, edges, triangles and tetrahedra of the triangulation $\tri$
respectively. Let $I = \{0, 1/2, 1, 3/2, \ldots, (r-2)/2\}$ be the set of the
first $r-1$ non-negative half-integers. 
A \emph{colouring} of $\tri$ is defined to be a map $\theta\co E \to I$;
that is, $\theta$ ``colours'' each edge of $\tri$ with an element
of $I$. A colouring $\theta$ is \emph{admissible} if, for each triangle of 
$\tri$, the three edges $e_1$, $e_2$, and $e_3$ bounding the triangle satisfy 
the 
\begin{itemize}
  \item \emph{parity condition} $\theta(e_1)+\theta(e_2)+\theta(e_3)\in \Z$;
  \item \emph{triangle inequalities} $\theta(e_i) \leq \theta(e_j) + 
    \theta(e_k)$, $\{i,j,k\} = \{ 1,2,3\}$; and 
  \item \emph{upper bound constraint} $\theta(e_1)+\theta(e_2)+\theta(e_3)\leq 
    r-2$.
\end{itemize}
For a triangulation $\tri$ and a value $r \geq 3$, its set of 
admissible colourings is denoted by $\adm (\tri,r)$.

For each admissible colouring $\theta$ and for each vertex $u \in V$, edge 
$e \in E$, triangle $f \in F$ or tetrahedron $t \in T$ we define 
\emph{weights} $|u|_{\theta}, |e|_{\theta}, |f|_{\theta}, 
|t|_{\theta} \in \C$. The 
weights of vertices are constant, and the weights of edges, triangles and 
tetrahedra only depend on the colours of edges they are incident to. 
Using these weights, we define the \emph{weight of the colouring} to be
\begin{equation}
\label{eq:colouring}
|\tri|_{\theta} =
    \prod_{v \in V} |u|_{\theta} \times
    \prod_{e \in E} |e|_{\theta} \times
    \prod_{f \in F} |f|_{\theta} \times
    \prod_{t \in T} |t|_{\theta},
\end{equation}

\emph{Invariants of Turaev-Viro types} of $\tri$ are defined as sums of the 
weights of all admissible colourings of $\tri$, that is
$\tv_{r}(\tri) = \sum_{\theta \in \adm(\tri,r)} |\tri|_{\theta}$.

In \cite{turaev92-invariants}, Turaev and Viro show that, when the weighting 
system satisfies some identities, $\tv_{r}(\tri)$ is indeed an invariant of 
the manifold; that is, if $\tri$ and $\tri'$ are generalised triangulations of 
the same closed 3-manifold $\M$, then $\tv_{r}(\tri) = \tv_{r}(\tri')$ for 
all $r$. We will thus sometimes abuse notation and write 
$\tv_{r} (\M)$, meaning the Turaev-Viro invariant computed for a triangulation
of $\M$.
In Section~\ref{app:tv-weights} we give the precise definition of the weights 
of the original Turaev-Viro invariant at $\SL_2(\C)$, which not only
depend on $r$ but also on a second integer $0 < q < 2r$. The definition of these 
weights is rather involved, and the study of admissible colourings in 
Section~\ref{sec:col-tetrahedron} allows us to give more comprehensible 
formulae.

For an $n$-tetrahedra triangulation $\tri$ with $v$ vertices (and thus, 
necessarily $n+v$ edges), there is a simple backtracking algorithm to compute 
$\tv_r(\tri)$ by testing the $(r-1)^{v+n}$ possible colourings for admissibility 
and computing their weights. The case $r=3$ can however be computed in 
polynomial time, due to a connection between $\tv_3$ and 
homology~\cite{Burton15TuraevViro,matveev03-algms}. 

\subsection{Turaev-Viro invariants at a cohomology class}
\label{ssec:tvhom}

Let $\Hom^1(\tri, \Z_2) = (\Z_2)^{\beta_1 (\tri, \Z_2)}$ be the cohomology group
of $\tri$ in dimension one with $\Z_2$ coefficients, and let $\alpha$ be a
$1$-cocycle in $\tri$, that is, a representative of an element in 
$\Hom^1(\tri, \Z_2)$. Following the definition of $1$-cohomology it can be shown
that on each triangle, $\alpha$ evaluates to $1$ on none or two of its edges. 
Thus, by colouring all the edges contained in $\alpha$ by $1/2$ and the 
remaining ones by $0$, $\alpha$ defines an element in $\adm (\tri,3)$. 

\begin{proposition}
  \label{prop:reduction}
  Let $\tri$ be a $3$-manifold triangulation with edge set $E$, $r\geq3$,
  and $\theta \in \adm(\tri,r)$. Then the {\em reduction of $\theta$},
  defined by $\theta' : E \to \{0, 1/2 \}; \,\, e \mapsto \theta(e) - \lfloor 
  \theta(e) \rfloor $, is an admissible colouring in $\adm (\tri, 3)$.
\end{proposition}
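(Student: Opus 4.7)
The plan is to verify that $\theta'$ satisfies the three admissibility conditions of Section~\ref{ssec:tv} for $r=3$, namely that on every triangle with edges $e_1,e_2,e_3$ we have (i) $\theta'(e_1)+\theta'(e_2)+\theta'(e_3)\in\Z$, (ii) the triangle inequalities, and (iii) the sum is $\leq r-2 = 1$. The key observation is that since $\theta(e) \in \{0, 1/2, 1, \ldots, (r-2)/2\}$, the fractional part $\theta(e) - \lfloor \theta(e) \rfloor$ lies in $\{0, 1/2\}$, so $\theta'$ is already a map $E \to \{0, 1/2\}$ as required.

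The central step is the parity condition, and it will drive everything else. Fix a triangle with edges $e_1, e_2, e_3$. Writing $\theta(e_i) = \lfloor \theta(e_i) \rfloor + \theta'(e_i)$ and using that $\sum_i \lfloor \theta(e_i) \rfloor \in \Z$ trivially, the admissibility of $\theta$ forces $\sum_i \theta'(e_i) \in \Z$. Since each $\theta'(e_i)$ is either $0$ or $1/2$, the sum is of the form $k/2$ with $k \in \{0,1,2,3\}$, and being an integer means $k \in \{0,2\}$. Hence on every triangle the number of half-integer-coloured edges under $\theta$ is either $0$ or $2$, which immediately gives the parity condition for $\theta'$ and also shows the sum $\sum_i \theta'(e_i)$ is either $0$ or $1$, thereby verifying the upper bound constraint $\sum_i \theta'(e_i) \leq 1 = r-2$ for $r=3$.

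It remains to check the triangle inequalities $\theta'(e_i) \leq \theta'(e_j) + \theta'(e_k)$. By the dichotomy above, only two configurations can occur on any triangle: either all three edges satisfy $\theta'(e_i)=0$, in which case every inequality is $0 \leq 0$; or exactly two edges carry $1/2$ and one carries $0$, in which case the inequalities become $1/2 \leq 1/2 + 0$, $1/2 \leq 0 + 1/2$ and $0 \leq 1/2 + 1/2$, all of which clearly hold. The main (and only mild) obstacle is the parity step; the triangle inequalities and the upper bound are then automatic, and no structural hypotheses on $\tri$ beyond being a 3-manifold triangulation are required.
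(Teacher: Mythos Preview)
Your proof is correct and follows essentially the same approach as the paper: use the parity condition on $\theta$ to conclude that on each triangle either zero or two edges carry a half-integer colour, from which all three admissibility conditions for $\theta'$ at $r=3$ follow. The paper's version is simply terser, leaving the verification of the triangle inequalities and upper bound implicit.
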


\begin{proof}
  Let $f$ be a triangle of $\tri$ with edges $e_1$, $e_2$, and $e_3$.
  Since $\theta \in \adm(\tri,r)$ is admissible, we have $\theta (e_1) + 
  \theta (e_2) + \theta (e_3) \in \mathbb{Z}$. Thus, there are either
  no or two half-integers amongst the colours of the edges of $f$ and
  $\theta' \in \adm(\tri,3)$.  
\qed \end{proof}

Thus every colouring $\theta \in \adm (\tri,r)$ can be associated to a 
$1$-cohomology class of $\tri$ via its reduction $\theta'$. 
We know from \cite{matveev03-algms,turaev92-invariants} that this construction 
can be used to split $\tv_{r,q} (\tri)$ into simpler invariants indexed by
the elements of $\Hom^1(\tri, \Z_2)$. More precisely, let $[\alpha] \in 
\Hom^1(\tri,\Z_2)$ be a cohomology class, then $\tv_r(\tri,[\alpha]) = 
\sum_{\theta \in \adm(\tri,r), \theta' \in [\alpha]} 
|\tri|_\theta $, where $\theta'$ denotes the reduction of $\theta$, is an invariant 
of $\tri$. The special case $\tv_r(\tri,[0])$ is of particular importance for 
computations as explained in further detail in Section~\ref{sec:bounds}.

\subsection{Weights of the Turaev-Viro invariant at $\SL_2(\C)$}
\label{app:tv-weights}

Let $\tri$ be a generalised triangulation of a closed 3-manifold $M$, let $r$ 
and $q$ be integers with $r \geq 3$, $0 < q < 2r$, and let $\gcd(r,q)=1$.
We define the Turaev-Viro invariant $\tv_{r,q}(\tri)$ at $\SL_2(\C)$ as follows.

Let $V$, $E$, $F$ and $T$ denote the set of vertices, edges,
triangles and tetrahedra respectively of the triangulation $\tri$.
Let $I = \{0, 1/2, 1, 3/2, \ldots, (r-2)/2\}$.
For each admissible colouring $\theta$ and
for each vertex $v \in V$, edge $e \in E$, triangle $f \in F$
or tetrahedron $t \in T$, we define \emph{weights}
$|v|_\theta, |e|_\theta, |f|_\theta, |t|_\theta \in \C$. 

\medskip
Our notation differs slightly from Turaev and Viro \cite{turaev92-invariants};
most notably, Turaev and Viro do not consider
triangle weights $|f|_\theta$, but instead incorporate an additional
factor of $|f|_\theta^{1/2}$ into each tetrahedron weight
$|t|_\theta$ and $|t'|_\theta$ for the two tetrahedra $t$ and $t'$ containing $f$.
This choice of notation simplifies the notation and avoids unnecessary
(but harmless) ambiguities when taking square roots.

Let $\zeta = e^{i \pi q / r} \in \C$.  Note that our conditions imply that
$\zeta$ is a $(2r)$-th root of unity, and that $\zeta^2$ is a
\emph{primitive} $r$-th root of unity; that is,
$(\zeta^2)^k \neq 1$ for $k=1,\ldots,r-1$.
For each positive integer $i$, we define
$[i] = (\zeta^i-\zeta^{-i})/(\zeta-\zeta^{-1})$ and,
as a special case, $[0] = 1$.
We next define the ``bracket factorial''
$[i]! = [i]\,[i-1] \ldots [0]$.
Note that $[r] = 0$, and thus $[i]! = 0$ for all $i \geq r$.

We give each vertex constant weight
\begin{equation*}
|v|_\theta = \frac{\left|\zeta-\zeta^{-1}\right|^2}{2r} ,
\end{equation*}
and each edge $e$ of colour $i \in I$ (i.e.,
for which $\theta(e) = i$) 
\begin{equation*}
|e|_\theta = (-1)^{2i} \cdot [2i+1].
\end{equation*}
A triangle $f$ whose three edges have colours $i,j,k \in I$ is assigned the
weight
\[ |f|_\theta = (-1)^{i+j+k} \cdot
    \frac{[i+j-k]! \cdot [i+k-j]! \cdot [j+k-i]!}{[i+j+k+1]!}. \]
Note that the parity condition and triangle inequalities ensure that
the argument inside each bracket factorial is a non-negative integer.

\begin{figure}[t]
    \begin{center}
        \includegraphics[width = .15\textwidth]{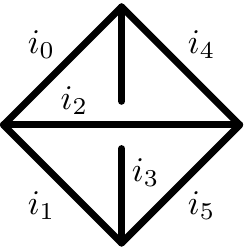}
        \caption{Edge colours of a tetrahedron. \label{fig:tet}}
    \end{center}
\end{figure}

Finally, let $t$ be a tetrahedron with edges colours
$i_0,i_1,i_2,i_3,i_4,i_5$ as indicated in Figure~\ref{fig:tet}. In particular,
the four triangles surrounding $t$ have colours
$(i_0,i_1,i_3)$, $(i_0,i_2,i_4)$, $(i_1,i_2,i_5)$ and $(i_3,i_4,i_5)$,
and the three pairs of opposite edges have colours
$(i_0,i_5)$, $(i_1,i_4)$ and $(i_2,i_3)$.  We define
\begin{align*}
\tau_\theta(t,z) &=
    [z-i_0-i_1-i_3]! \cdot [z-i_0-i_2-i_4]! \cdot
    [z-i_1-i_2-i_5]! \cdot [z-i_3-i_4-i_5]!\,, \\
\kappa_\theta(t,z) &= [i_0+i_1+i_4+i_5-z]! \cdot
                  [i_0+i_2+i_3+i_5-z]! \cdot
                  [i_1+i_2+i_3+i_4-z]!
\end{align*}
for all integers $z$ such that the bracket factorials above all have non-negative
arguments; equivalently, for all integers $z$ in the range
$\zmin \leq z \leq \zmax$ with
\begin{align*}
\zmin &= \max\{i_0+i_1+i_3,\ i_0+i_2+i_4,\ i_1+i_2+i_5,\ i_3+i_4+i_5\}\,; \\
\zmax &= \min\{i_0+i_1+i_4+i_5,\ i_0+i_2+i_3+i_5,\ i_1+i_2+i_3+i_4\}.
\end{align*}
Note that, as before, the parity condition ensures
that the argument inside each bracket factorial above is an integer.
We then declare the weight of tetrahedron $t$ to be
\begin{equation*}
|t|_\theta = \sum_{\zmin \leq z \leq \zmax}
    \frac{(-1)^z \cdot [z+1]!}{\tau_\theta(t,z) \cdot \kappa_\theta(t,z)},
\end{equation*}
%
Note that all weights are polynomials in $\zeta$ with rational coefficients, 
where $\zeta = e^{i \pi q/r}$.
Using these weights, we can define the weight $|\tri|_\theta$ of an edge 
colouring $\theta$ as done in Equation~(\ref{eq:colouring}), 
%
and the Turaev-Viro invariant to be the sum of the weights 
of all admissible colourings
\[ \tv_{r,q}(\tri) = \sum_{\theta\ \mathrm{admissible}} |\tri|_\theta. \]

%
%
%
%

%


\section{Admissible colourings of tetrahedra and embedded surfaces}
\label{sec:col-tetrahedron}


In the definition of Turaev-Viro type invariants, colours are assigned to edges 
and the admissibility conditions only depend on the triangles. In this section,
we translate these conditions into a characterisation of the ``admissible 
colourings'' of a tetrahedron. At the end of the section we discuss the 
connection between tetrahedra colourings and the theory of embedded surfaces in 
$3$-manifolds.

Let $\tri$ be a triangulated $3$-manifold and let $t$ be a tetrahedron. We interpret a colouring of the edges of $t$ as a system of disjoint polygonal cycles on the boundary of $t$ (see Theorem~\ref{thm:normalarcs}). We characterise these cycles in terms of their ``intersections patterns'' with the edges of $t$. To do so, we translate this combinatorial problem into the classification of simple closed curves in the $3$-punctured disk $\disk_3$. We prove that the notion of ``intersection patterns'' is well-defined for isotopy classes of simple closed loops in $\disk_3$. Finally we study the action of the \emph{mapping class group} of $\disk_3$ on these isotopy classes and on their intersections with the edges.
Due to the compact representation of the mapping class group as the braid group, and the symmetries of the tetrahedron, we reduce the classification of simple closed curves to an inductive argument using a small case study.

To motivate this study, we rewrite the formulae for the original 
Turaev-Viro invariant at $\SL_2(\C)$ in this new ``system of coordinates'' (see Theorem~\ref{thm:weights}). The formulae are substantially simpler, making this approach to Turaev-Viro type invariants appear promising.



\medskip
\noindent
{\bf System of polygonal cycles:} We give an interpretation of admissible colourings on a triangulation $\tri$ in terms of {\em normal arcs}, i.e., straight lines in 
the interior of a triangle which are pairwise disjoint and meet the edges of a 
triangle, but not the vertices (see Figure~\ref{fig:normalarcs}).

\begin{figure}[t]
  \centering
    \includegraphics[width = 5.1cm]{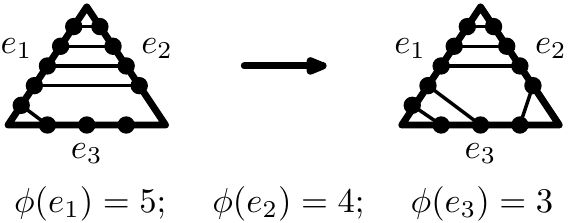}
%
  \caption{Constructing a system of normal arcs from edge colourings by reorganising matchings.} 
  \label{fig:normalarcs}
\end{figure}

For a colouring $\theta(e)$ of an edge $e$, we define 
$\phi(e) = 2\theta(e) \in \Z$, and we use the term ``colouring'' for $\phi$
for the remainder of this section. This way the colours $\phi(e_1),\phi(e_2),
\phi(e_3) \in \{0, 1, \ldots , r-2\}$ can be interpreted as the number of 
intersections of normal arcs with the respective edges of the triangulation 
(see Figure~\ref{fig:normalarcs}). 

For a tetrahedron $t$ with edges $\{e_0, \ldots, e_5\}$, and a colouring of $t$ 
$\phi\colon \{e_0, \ldots, e_5\} \to \{0,1,\ldots, r-2\}$, we define the 
\emph{intersection symbol of $t$} to be the $2\times 3$ matrix of the values $\phi(e_i)$, where the first row contains the colours of the edges of a triangle, and colours of opposite edges appear in the same column; see Figure~\ref{fig:unfold_tet}. 
We treat intersection symbols like matrices, and allow addition and 
multiplication by a scalar. If the two rows of the intersection symbol are 
identical, we write $\intsymb{\phi(e_0)}{\phi(e_1)}{\phi(e_2)}$ for short.
Note how different tetrahedron symmetries act on the entries of an intersection
symbol.


\begin{theorem}[Burton et al.\cite{Burton15TuraevViro}]
    Given a $3$-manifold triangulation $\tri$ and $r \geq 3$,
    an admissible colouring of the edges of $\tri$ with $r-1$ colours 
    corresponds to a system of normal arcs
    in the triangles of $\tri$ with $\leq r -2 $ arcs per 
    triangle forming a collection of \emph{polygonal cycles} on the boundary of 
    each tetrahedron of $\tri$.
\label{thm:normalarcs}
\end{theorem}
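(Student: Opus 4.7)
The plan is to prove the correspondence in both directions, working locally first inside a single triangle and then assembling across the boundary of each tetrahedron. Setting $\phi(e)=2\theta(e)$, I translate the three admissibility conditions into integer conditions: parity becomes $\phi(e_1)+\phi(e_2)+\phi(e_3)\in 2\Z$, the triangle inequalities remain $\phi(e_i)\le\phi(e_j)+\phi(e_k)$, and the upper bound becomes $\phi(e_1)+\phi(e_2)+\phi(e_3)\le 2(r-2)$. The goal of the local step is to recognise these three conditions as exactly the constraints guaranteeing that $\phi(e_i)$ points on each edge $e_i$ can be paired by disjoint arcs inside the triangle.

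First I would handle a single triangle with edges $e_1,e_2,e_3$. Any disjoint collection of normal arcs in the triangle is determined, up to isotopy rel boundary, by the three numbers $a_{jk}$ of arcs running between edges $e_j$ and $e_k$; then $\phi(e_i)=a_{ij}+a_{ik}$ for $\{i,j,k\}=\{1,2,3\}$. Inverting this linear system gives
\begin{equation*}
a_{jk}=\tfrac{1}{2}\bigl(\phi(e_j)+\phi(e_k)-\phi(e_i)\bigr),
\end{equation*}
so the existence of a non-negative integer solution is equivalent to the parity condition together with the triangle inequalities. The upper bound constraint translates directly to $a_{12}+a_{13}+a_{23}\le r-2$, i.e.\ at most $r-2$ arcs per triangle. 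Conversely, given any such $a_{jk}$ I can always realise the arcs as $r-2$ mutually parallel families inside the triangle, so a valid arc system exists if and only if the colouring is admissible.

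Next I would globalise to the boundary of a tetrahedron $t$. Each edge $e$ of $t$ is shared by two of its four triangular faces, and the $\phi(e)$ marked points on $e$ are common to both face arc systems. Thus every endpoint of an arc in one face is the endpoint of exactly one arc in the adjacent face, so gluing the arcs along shared edges produces a $1$-manifold embedded in $\partial t$ with no boundary. Since each face carries only finitely many arcs and each component alternates arc/edge-crossing/arc, the result is a disjoint union of closed polygonal cycles on $\partial t$, with the prescribed number of crossings $\phi(e)$ with each edge. The same gluing extends consistently to the whole triangulation because the colour of an edge is well-defined independently of the incident tetrahedron.

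I expect the main subtlety, rather than the counting itself, to be clean bookkeeping of the disjointness and the isotopy realisation: making sure that the parallel-arc construction in each triangle is canonical enough that the polygonal cycles glue unambiguously across shared edges, and verifying that the converse direction (reading off $\phi(e)$ as the intersection count of any such cycle system with $e$) yields a colouring whose triangle-restrictions automatically satisfy parity, triangle inequalities and the upper bound. Both directions then combine into the stated bijection.
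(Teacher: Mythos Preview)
Your proposal is correct and follows essentially the same two-step strategy as the paper: first establish the local correspondence in each triangle between the admissibility conditions and a unique system of normal arcs, then observe that along each edge of a tetrahedron the arc endpoints from the two adjacent faces match up to form closed polygonal cycles on $\partial t$. Your direct inversion of the linear system $\phi(e_i)=a_{ij}+a_{ik}$ to obtain $a_{jk}=\tfrac{1}{2}(\phi(e_j)+\phi(e_k)-\phi(e_i))$ is a slightly cleaner way of phrasing what the paper does constructively (building the arcs and then reorganising matchings as in Figure~\ref{fig:normalarcs}), but the underlying argument and the role of parity, triangle inequalities, and the upper bound are identical.
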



\begin{proof}
  Following the definition of an admissible colouring from 
  Section~\ref{ssec:tv}, the colours of the edges $e_1$, $e_2$, $e_3$ of a 
  triangle $f$ of $\tri$ must satisfy the parity condition ($\phi(e_1)+
  \phi(e_2)+\phi(e_3)$ even) and the triangle inequalities.

  Without loss of generality, let $\phi(e_1) \geq \phi(e_2) \geq \phi(e_3)$. 
  We construct a system of normal arcs by first drawing $\phi(e_2)$ arcs between
  edge $e_1$ and $e_3$ and $\phi(e_1) - \phi(e_2)$ arcs between edge $e_1$ and 
  $e_3$. This is always possible since $\phi(e_1) \leq \phi(e_2) + \phi(e_3)$ 
  by the triangle inequality. Furthermore, the parity condition ensures that an 
  even number of unmatched intersections remains which, by construction, all 
  have to be on edge $e_3$. If this number is zero we are done. Otherwise we 
  start replacing normal arcs between $e_1$ and $e_2$ by pairs of normal arcs, 
  one between $e_1$ and $e_3$ and one between $e_2$ and $e_3$ (see 
  Figure~\ref{fig:normalarcs}). In each step, the number of unmatched 
  intersection points decreases by two. By the assumption $\phi(e_2) \geq 
  \phi(e_3)$, this yields a system of normal arcs in $f$ which leaves no 
  intersection on the boundary edges unmatched. This system of normal arcs
  is unique for each admissible triple of colours.
  By the upper bound constraint, we get at most $r-2$ normal arcs on $f$.

  Looking at the boundary of a tetrahedron $t$ of $\tri$ these normal arcs
  form a collection of closed polygonal cycles. To see this, note that each 
  intersection point of a normal arc in a triangle with an edge is part of 
  exactly one normal arc in that triangle and
  that there are exactly two triangles sharing a given edge.
\qed \end{proof}

In the following, we classify these polygonal cycles.

\begin{figure}[t]
\centering
\includegraphics[width=13cm]{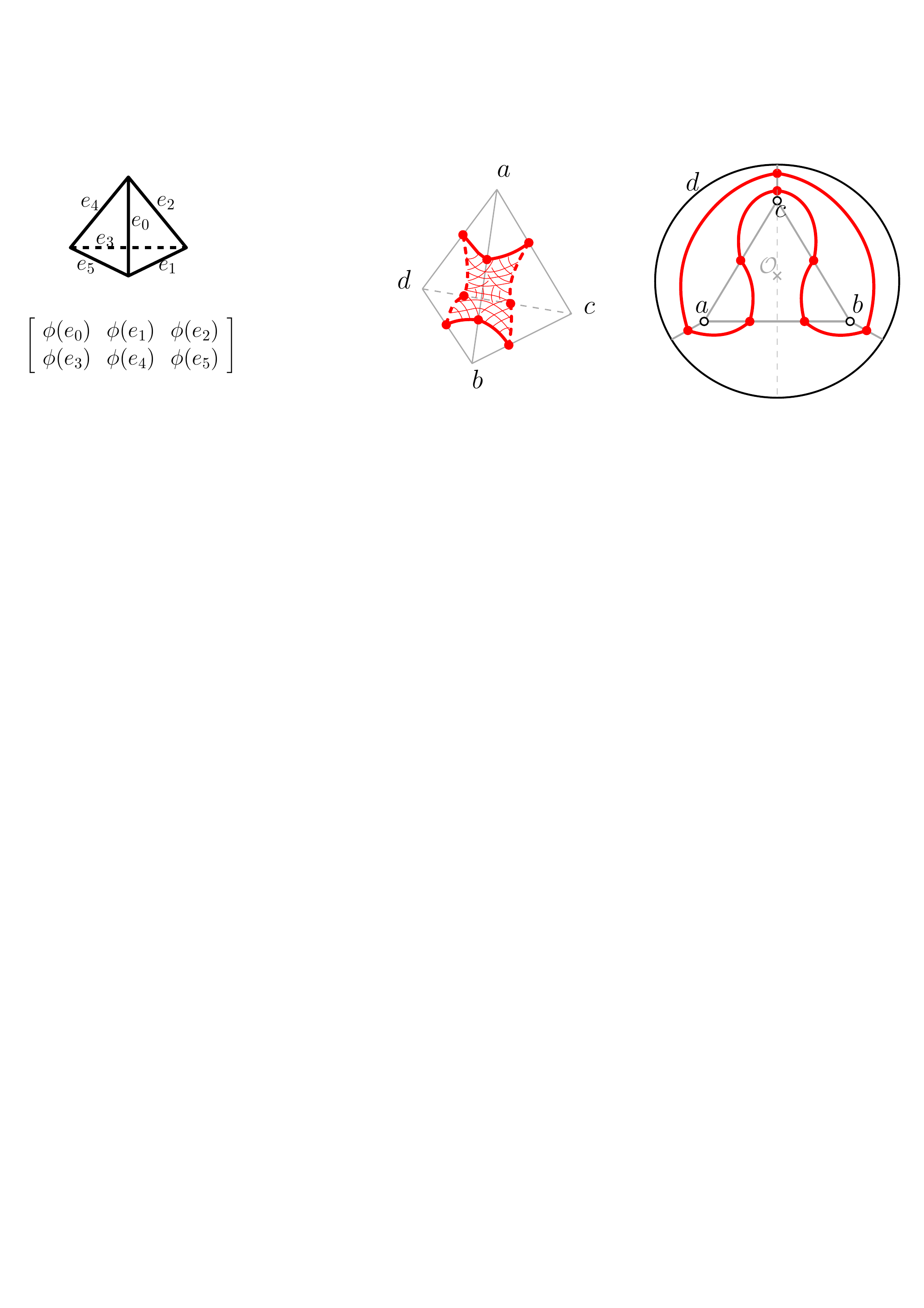}
\caption{Left: Intersection symbol. Right: Tetrahedron with a polygonal cycle formed by $8$ normal arcs, bounding an octagon within the interior of the tetrahedron. Equivalent representation of the cycle as a loop in the disk with punctures $\{a,b,c\}$, and $d$ sent to the boundary. The loop in $\disk_3$ is reduced, and has intersection symbol $\intsymb{2}{1}{1}$.}
\label{fig:unfold_tet}
\end{figure}



\subsection{Topology of the punctured disk} 

A homeomorphism $\X \to \Y$ between two topological spaces is a continuous 
bijective map with continuous inverse. Two topological spaces admitting a 
homeomorphism are said to be \emph{homeomorphic} or \emph{topologically 
equivalent}, and we write $\X \cong \Y$. Two homeomorphisms $f,g \colon \X \to 
\X$ from a \emph{closed} topological space to itself are \emph{isotopic} if 
there exists a continuous map $H: \X \times [0;1] \to \X$ satisfying 
$H(\cdot,0) = f$ and $H(\cdot,1) = g$, and for each $t \in [0;1]$, $H(\cdot, 
t)\colon \X \to \X$ is a homeomorphism. 

Let $\disk$ be the closed $2$-dimensional disk, and let
$a,b,c \in \disk$ be three distinct, arbitrary but fixed, points in its 
interior $\mathring{\disk} = \disk \setminus \partial \disk$. Denote by $\disk_3$ the $3$-punctured disk $\disk \setminus \{ a,b,c\}$. 
A self-homeomorphism $f\colon \disk \setminus \{ a,b,c\} \to \disk \setminus \{ a,b,c\}$ 
is \emph{isotopic to the identity} if its completion $\widetilde{f}\colon \disk \to \disk$ 
is isotopic to $\id_{\disk}$, by an isotopy $H\colon \disk \times [0;1] \to \disk$ that 
fixes points $a$, $b$ and $c$. 


Let $\mathrm{Homeo}(\disk_3, \partial \disk)$ be 
the group, under composition, of (orientation preserving) homeomorphisms 
$\disk_3 \to \disk_3$ that are the identity on the outer boundary 
$\partial \disk$, and let $\mathrm{Homeo}_0(\disk_3, \partial \disk)$ be the 
subgroup of such homeomorphisms that are isotopic to the identity. 
We define the \emph{mapping class group of $\disk_3$ relative to 
$\partial \disk$} to be the group quotient:
\[
\mcg(\disk_3, \partial \disk ) = \mathrm{Homeo}(\disk_3, \partial \disk) / 
\mathrm{Homeo}_0(\disk_3, \partial \disk)
\]
that we denote by $\mcg_3$ for short. It is known that $\mcg_3$ is isomorphic to the braid group $B_{3}$~\cite{birman1975braids}. This is the non-abelian group 
generated by two elements $\sigma_1$ and $\sigma_2$, satisfying $\sigma_1 \sigma_2 
\sigma_1 = \sigma_2 \sigma_1 \sigma_2$. 



A \emph{free loop} 
on $\disk \setminus \{a,b,c\}$ is a continuous embedding of the 
circle $S^1$ into $\mathring{\disk} \setminus \{a,b,c\}$, i.e. 
$L \colon S^1 \to \disk \setminus \{a,b,c\}$. A free loop is \emph{simple} if 
it has no self intersection. Two simple free loops $L_1, L_2$ are 
\emph{isotopic} if there exists a self-homeomorphism of $\disk \setminus 
\{a,b,c\}$ isotopic to the identity that sends the image of $L_1$ to the image 
of $L_2$. Recall that, due to the \emph{Jordan-Schoenflies theorem}~\cite{birman1975braids}, 
a simple 
free loop in the plane separates the plane into two regions, the \emph{inside} and the \emph{outside}, and
there exists a self-homeomorphism of the plane under which the loop is mapped 
onto the unit circle.
We use the term \emph{loop} to denote simple free loops as well as their 
image in $\disk \setminus \{a,b,c\}$ as simple closed curves. Furthermore, we 
assume that all loops are smooth, and cut tetrahedron edges 
transversally. 
We refer to~\cite{birman1975braids} for more details 
about these definitions.

\subsection{Classification of loops by their intersection symbol}

Given a tetrahedron $t$, its boundary $\partial t$ is a topological $2$-sphere. 
Removing each vertex of $t$, seen as a point, leads to the $4$-punctured sphere,
or equivalently the $3$-punctured disk $\disk_3$ (after closing the outer 
boundary). We also embed the tetrahedron edges in $\disk_3$, as illustrated in 
Figure~\ref{fig:unfold_tet}, using straight line segments. We say that a loop in 
$\disk_3$ is \emph{reduced} if it does not cross a tetrahedron edge twice in a 
row. We define the \emph{intersection symbol of a reduced loop} in $\disk_3$ to 
be the $2 \times 3$ integer matrix of intersection numbers of the reduced loop 
with the tetrahedron edges embedded in $\disk_3$. Note that reduced loops are the topological equivalent of the combinatorial ``polygonal cycles'' 
defined in Theorem~\ref{thm:normalarcs} (by convention, we put the crossing numbers of edges $ab$, $bc$ then $bd$ in the first row of intersection symbols). Naturally, the intersection symbol 
of a reduced loop constitutes a valid tetrahedron intersection symbol. 
For a loop $L$ in $\disk_3$, we denote its isotopy class by $[L]$; it is the 
class of all loops isotopic to $L$ in $\disk_3$. We prove that the 
``intersection symbol'' is well-defined for isotopy classes of loops.

\begin{lemma}
\label{lem:reduced_loop}
The following is true:
\begin{enumerate}[(i)]
\item any isotopy class of loops in $\disk_3$ admits a reduced loop,
\item any two isotopic reduced loops have equal intersection symbols,
\item any two non-isotopic reduced loops have distinct intersection symbols.
\end{enumerate}
\end{lemma}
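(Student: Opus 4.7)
My plan separates the three parts, exploiting throughout the key fact that the punctures of $\disk_3$ coincide with vertices of the embedded tetrahedron, so no triangular face has an interior puncture. For part (i), I would apply a bigon-reduction argument. Starting from any representative $L$ of the isotopy class, suppose $L$ crosses some edge $e$ at two consecutive points $p_1, p_2$. Then the sub-arc $\alpha \subset L$ between them lies in a single triangle $T$, and together with the sub-arc $\beta \subset e$ from $p_1$ to $p_2$ it bounds a closed disk $D \subset T$. Since the interior of $D$ contains no puncture, an isotopy of $\disk_3$ supported near $D$ pushes $\alpha$ across $\beta$, eliminating the two crossings and strictly decreasing the total intersection count. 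Iterating yields a reduced representative in finitely many steps.

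For part (ii), I would prove the stronger claim that every reduced loop realises the minimum geometric intersection number with each edge within its isotopy class; since these minima are isotopy invariants, (ii) follows. Suppose for contradiction that a reduced loop $L$ has non-minimum intersections with some edge $e$. The bigon criterion in the punctured disk supplies an embedded disk $D \subset \disk_3$ with $\partial D = \alpha \cup \beta$, where $\alpha \subset L$ and $\beta \subset e$, whose interior is puncture-free. Choose such a bigon with the smallest possible number of crossings of $\alpha$ with other tetrahedron edges. If $\alpha$ still meets some edge $e'$, then a component of $e' \cap D$ is an arc with both endpoints on $\alpha$ (it cannot end on $\beta \subset e$, as edges meet only at vertices), which together with a sub-arc of $\alpha$ bounds a strictly smaller empty bigon---a contradiction. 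Hence $\alpha$ lies in a single triangle, giving two consecutive crossings of $L$ with $e$ and contradicting reducedness.

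For part (iii), given two reduced loops $L_1, L_2$ with the same intersection symbol, I would construct an ambient isotopy of $\disk_3$ carrying $L_1$ to $L_2$ in two stages. First, apply an isotopy sliding the crossings of $L_2$ along each edge so that $L_1$ and $L_2$ meet each edge at the same set of points. Then, by Theorem~\ref{thm:normalarcs}, within each triangle $T$ the combinatorial system of normal arcs formed by $L_i \cap T$ is uniquely determined by the three edge-intersection numbers, so there is a relative isotopy $h_T : T \to T$ fixing $\partial T$ pointwise that takes $L_1 \cap T$ to $L_2 \cap T$. Because each $h_T$ is the identity on $\partial T$, the family $\{h_T\}$ glues to a global ambient isotopy of $\disk_3$ carrying $L_1$ to $L_2$.

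The main obstacle lies in part (iii): establishing that within each triangle the normal arc system is unique up to isotopy relative to the boundary, given only the three edge-intersection numbers. This reduces to a standard planarity fact about disjoint arc systems in a disk with prescribed boundary endpoints, combined with the observation that the matching of arc endpoints across each shared edge is forced by simplicity of the loops: each crossing point lies on exactly one arc in each of its two adjacent triangles, so the assembly is uniquely determined.
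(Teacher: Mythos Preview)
Your proposal is correct. Parts~(i) and~(iii) match the paper's proof closely: the paper also reduces a loop by eliminating consecutive same-edge crossings, and proves~(iii) by aligning intersection points on the edges, invoking the uniqueness of the normal arc system in each triangle (established in the proof of Theorem~\ref{thm:normalarcs}), and gluing the local isotopies into a global one.

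Part~(ii), however, takes a genuinely different route. The paper argues via the fundamental group: it reads off from each reduced loop a word in $\pi_1(\disk_3)=\langle \ell_a,\ell_b,\ell_c\rangle$, uses that isotopic free loops give conjugate words in the free group, and then that reducedness forces the conjugating word to be trivial. Your approach via the bigon criterion is more geometric and arguably cleaner: you show directly that a reduced loop is in minimal position with respect to every tetrahedron edge, so the six intersection numbers are intrinsic isotopy invariants. One minor wording point: when you write ``choose such a bigon with the smallest possible number of crossings,'' you should minimise over bigons between $L$ and \emph{any} tetrahedron edge, not just the initially fixed edge $e$; otherwise the smaller bigon you produce (between $L$ and $e'$) is not in the family you are minimising over, and the contradiction does not close. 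With that adjustment the innermost-bigon argument goes through. The payoff of your route is that it avoids the translation between intersection patterns and free-group words, a step the paper leaves somewhat implicit for the three edges $ab$, $bc$, $ca$ not directly tied to the chosen generators $\ell_a,\ell_b,\ell_c$; the paper's approach, on the other hand, yields slightly more, namely that the entire cyclic sequence of edge crossings---not just the unordered counts---is an isotopy invariant of the reduced loop.
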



\begin{proof}
(i) Let $L$ be an arbitrary free loop. If $L$ is reduced, then $[L]$ contains 
a reduced free loop. Otherwise, $L$ crosses the same edge twice in a row. In this
case we can deform $L$ locally via an isotopy, reducing the number of 
intersections between $L$ and the tetrahedron edges by two.
Reproducing this deformation eventually leads to a reduced loop $L' \in [L]$. 

(ii) Let $\pi_1(\disk_3,\mathcal{O})$ be the fundamental group of the $3$-punctured disk with 
base point being the center of the triangle $\mathcal{O}$, see 
Figure~\ref{fig:unfold_tet}. It is a classic result in planar topology (see for 
example~\cite{birman1975braids}) that this group is the free non-abelian group
with $3$ generators. Fixing an orientation, each of these generators is the 
homotopy class of the loop going around exactly one of the punctures exactly 
once---with the proper orientation. Equivalently, a generator is a loop that 
passes through exactly one of the segments $ad$, $bd$ and $cd$ in Figure~\ref{fig:unfold_tet} 
once---in the proper direction. Denote these generators by $\ell_a$, $\ell_b$ 
and $\ell_c$. 

Let $L_1$ and $L_2$ be two isotopic, reduced, simple free loops. Fix points
$x_1$ on $L_1$ and $x_2$ on $L_2$. Their intersection patterns with the line
segments in $\disk_3$, read starting from $x_1$ and $x_2$ respectively, define 
two words in $\pi_1(\disk_3) = \langle \ell_a,\ell_b,\ell_c \rangle$, denoted by 
$l_1$ and $l_2$ respectively. It is known (see for example~\cite{hatcher02-algebraic}) 
that, for $L_1$ and $L_2$ isotopic free loops, $l_1$ and $l_2$ must be conjugate, 
i.e. there exists a word $w$ such that $l_1 = w l_2 w^{-1}$. Thus we can choose
a new base-point $x'_1$ on $L_1$ giving rise to $l'_1 =w^{-1} w l_2$, but 
$L_1$ was reduced, and thus $w$ must be empty, $l'_1 = l_2$, and $L_1$ and
$L_2$ must have equal intersection symbols.

(iii) Suppose that two reduced loops $L_1, L_2$ have same intersection symbol
$s$. Using the construction from Theorem~\ref{thm:normalarcs}, we draw a 
``canonicial reduced loop'' $L$ for the admissible symbol $s$, by fixing points on 
tetrahedron edges for each intersection described in $s$, and drawing the unique
system of normal arcs to get $L \subset  \disk_3$. Because $L_1$ and $L_2$ are 
reduced, the restriction of $L_1$ (or $L_2$) to any triangular face 
(defined by the tetrahedron edges) is isotopic to the restriction of $L$ on this
face. Since the intersection points on the triangular boundaries have to align,
the isotopy can be made global, and both $L_1$ and $L_2$ are isotopic to $L$, 
hence $L_1$ and $L_2$ are isotopic.
\qed \end{proof}

It follows that we can refer to the intersection symbol of an isotopy class of 
loops $[L]$ as the intersection symbol of any reduced loop in $[L]$. By a small 
abuse of notation, we also refer to the intersection symbol of a loop $L$ as the
intersection symbol of $[L]$.

By virtue of the Jordan-Schoenflies theorem, we distinguish 
three types of loops: (i) loops containing no puncture in the inside; 
(ii) loops separating one puncture from the three others; and (iii) loops 
separating two punctures from the two others. Note that here we call the outer 
boundary of $\disk_3$ ``puncture'' as well. Naturally, loops of type (i) are 
trivial and have intersection symbol $\intsymb{0}{0}{0}$, and loops of type 
(ii) can be isotoped to a circle in a small neighbourhood of the puncture in 
their inside, and hence have $(2\times 3)$-intersection symbol 
$\left [ \begin{array}{lll} 1 & 0 & 1 \\ 0 & 1 & 0 \end{array} \right ]$, up to tetrahedron permutations. 
The case of loops of type (iii) is more interesting; we call these loops 
\emph{balanced}. We prove:

\begin{lemma}
\label{lem:loop_homeo}
For any two loops $L_1$ and $L_2$ of same type (i), (ii) or (iii), there exists a 
homeomorphism of $\disk_3$, constant on $\partial \disk$, sending $L_1$ to 
$L_2$.
\end{lemma}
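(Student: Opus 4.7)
The plan is to apply the Jordan--Schoenflies theorem together with the classification of compact surfaces with boundary and interior marked points. For each $i \in \{1,2\}$, the loop $L_i$ bounds a disk $D_i \subset \disk$ which we choose to be disjoint from $\partial \disk$; its complement $E_i := \disk \setminus \mathring{D_i}$ is then an annulus with boundary $L_i \sqcup \partial \disk$. By hypothesis, $D_1$ and $D_2$ contain the same number $k \in \{0,1,2\}$ of punctures (corresponding to types (i), (ii), (iii) respectively), so $E_1$ and $E_2$ each contain the remaining $3-k$ punctures.

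First, I would construct a homeomorphism $f_E \colon E_1 \to E_2$ sending punctures to punctures and restricting to the identity on $\partial \disk$. The classification of surfaces yields a homeomorphism of pairs $(E_1, \text{punctures}) \to (E_2, \text{punctures})$, since both pieces are annuli with $3-k$ marked interior points. To force the identity on $\partial \disk$, I would pre-compose with a self-homeomorphism of $E_1$ supported in a collar neighbourhood of $\partial \disk$: any orientation-preserving self-homeomorphism of $\partial \disk$ is realisable by such a collar homeomorphism, without disturbing $L_1$ or the interior punctures.

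Second, set $g := f_E|_{L_1} \colon L_1 \to L_2$ and extend it to a homeomorphism $f_D \colon D_1 \to D_2$ carrying punctures to punctures. Alexander's trick first extends $g$ to some homeomorphism of disks, ignoring the punctures; then post-composing with an element of $\mcg(D_2, \partial D_2)$---which is a braid group surjecting onto the symmetric group on the $k$ interior punctures of $D_2$---arranges the punctures to match. Since $f_D$ and $f_E$ agree on $L_1 \mapsto L_2$ by construction, gluing them yields the desired homeomorphism $\disk_3 \to \disk_3$ sending $L_1$ to $L_2$ and fixing $\partial \disk$ pointwise.

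The main obstacle is the first step, namely producing a homeomorphism $E_1 \to E_2$ whose restriction to $\partial \disk$ is prescribed (as the identity) independently of its restriction to $L_1$. This is resolved by the collar-neighbourhood correction above, which decouples the boundary behaviour on the two boundary circles of the annulus $E_i$ and provides the flexibility needed to enforce the boundary condition without breaking the rest of the construction.
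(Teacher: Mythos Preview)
Your proof is correct and close in spirit to the paper's, though organised differently. The paper works globally: it applies Jordan--Schoenflies in the filled-in disk $\disk$ to produce a self-homeomorphism sending $L_1$ to $L_2$ (via an intermediate unit circle), and then post-composes with a homeomorphism that fixes $L_2 \cup \partial\disk$ setwise and ``aligns'' the punctures on each side. You instead cut $\disk$ along $L_i$ into an inside disk $D_i$ and an outside annulus $E_i$, invoke the classification of surfaces with marked points on each piece, and glue. Your version has the advantage of making the condition on $\partial\disk$ completely explicit through the collar correction, whereas the paper's treatment of that boundary condition is rather terse.

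One imprecision worth fixing in your second step: after Alexander's trick, the images of the $D_1$-punctures under your extension are just some $k$ interior points of $D_2$, not a priori the $D_2$-punctures. Post-composing with (a representative of) an element of $\mcg(D_2,\partial D_2)$, which by definition preserves the set of $D_2$-punctures, cannot move arbitrary interior points onto those punctures; the surjection onto the symmetric group only lets you permute the $D_2$-punctures among themselves. What you actually need here is the transitivity of $\mathrm{Homeo}(D_2,\partial D_2)$ on unordered configurations of $k$ interior points (obtained, e.g., by composing point-pushing homeomorphisms). With that correction the gluing goes through as you describe.
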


\begin{proof}
This is a consequence of the Jordan-Schoenflies 
theorem. Consider the completion of $\disk_3 = \disk \setminus \{a,b,c\}$ into 
$\disk$ by filling up the three punctures. Let $h_1\colon \disk \to \disk$ and 
$h_2\colon \disk \to \disk$ be two self-homeomorphisms of the plane sending 
$L_1$ and $L_2$, respectively, to the unit circle. Consequently, $h_2^{-1} \circ 
h_1$ is a self-homeomorphism of $\disk$ sending $L_1$ to $L_2$. Since $L_1$ 
and $L_2$ are of the same type, their inside and outside in 
$\disk \setminus \{a,b,c\}$ are homeomorphic, by a homeomorphism that preserves 
the boundary $L_2 \cup \partial \disk$ (this homeomorphism ``aligns'' punctures). 
Composing $h_2^{-1} \circ h_1$ with this homeomorphism sends $\{a,b,c\}$ to 
$\{a,b,c\}$ in $\disk$, and defines the self-homeomorphism of $\disk_3$ sending 
$L_1$ to $L_2$.
\qed \end{proof}

\begin{figure}[t]
\centering
\includegraphics[width=6.5cm]{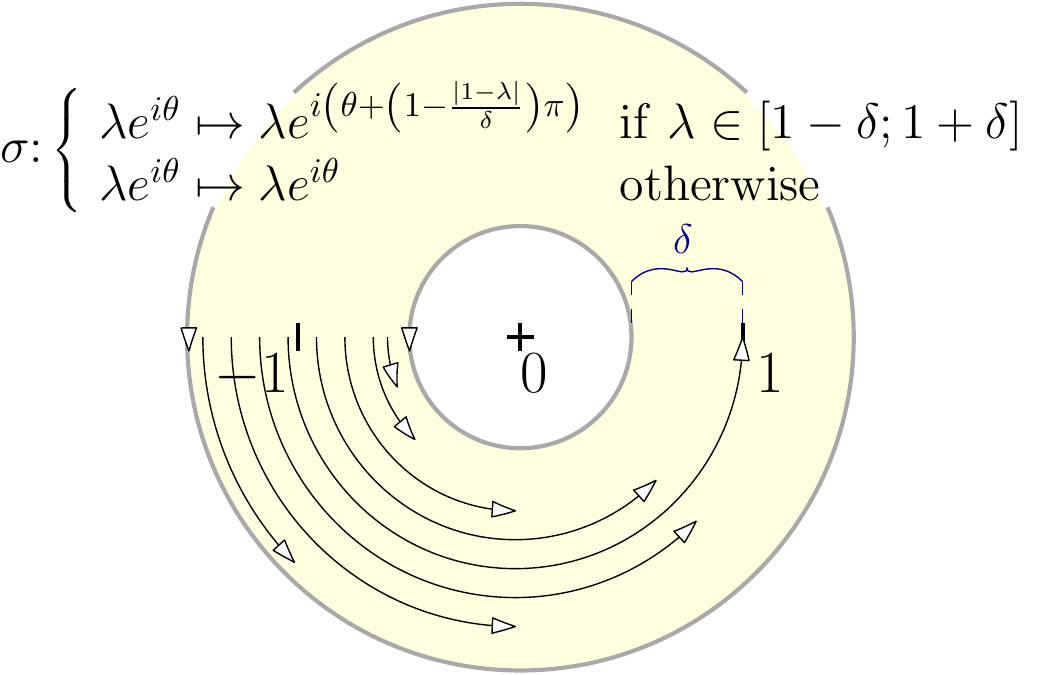} \hspace{0.5cm}
\includegraphics[width=6.5cm]{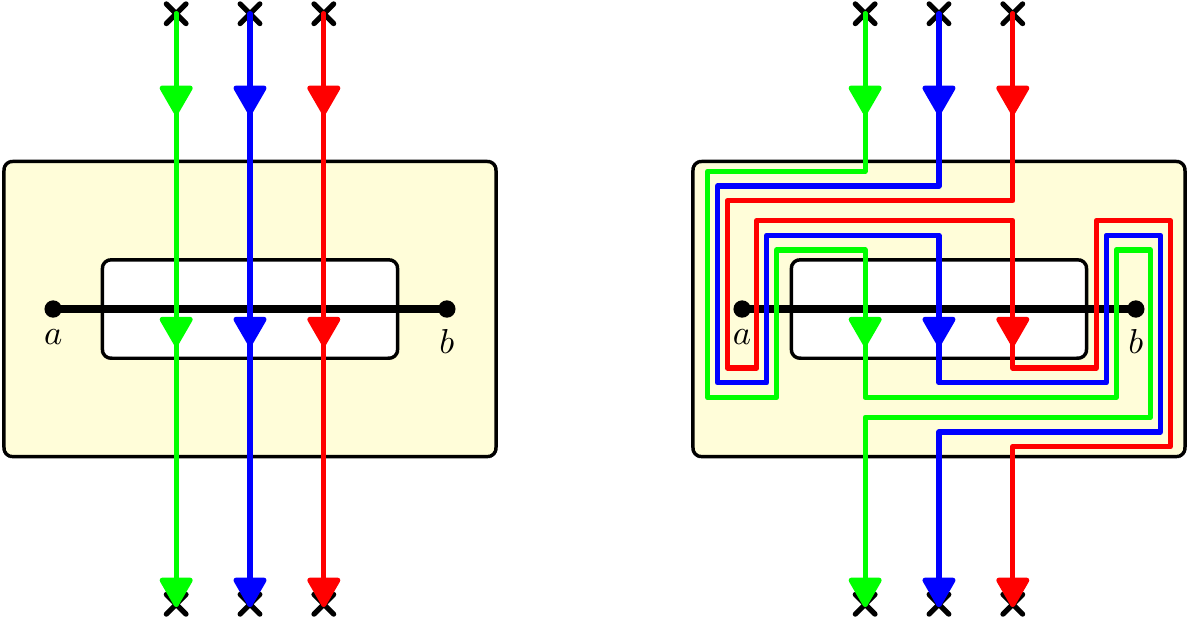}
\caption{Left: Construction of a homeomorphism in the complex plane exchanging 
$-1$ with $1$ (in the positive direction) as a model to exchange positions of 
punctures $a$ and $b$ in $\disk_3$. Right: Action of the homeomorphism exchanging punctures at $a$ and $b$ on pieces of loop crossing edge $ab$ transversally (the annulus has been placed in a close neighbourhood of edge $ab$, with $-1$ on puncture $a$ and $1$ on puncture $b$). Note that the curves must cross the annulus and edge $ab$ transversally; to maintain this property, we assume that we deform isotopically the loops ``outside'' the annulus after applying the homeomorphism.}
\label{fig:sigma_mcg}
\end{figure}

As a consequence, intersection symbols of loops are defined up to isotopy with the identity, and any pair of loops are related by a homeomorphism. Hence, the action of an element (i.e. a class of homeomorphisms) of the mapping class group $\mcg_3$ on an intersection symbol is well defined. 
Before classifying balanced loops, we give an explicit characterisation of the 
generators of the mapping class group $\mcg_3$, coming from the 
isomorphism with the braid group $B_3$. These generators are classes of 
homeomorphisms, exchanging two punctures. See Figure~\ref{fig:sigma_mcg} for an 
explicit homeomorphism exchanging punctures $a$ and $b$, and its local action on
curves intersecting the line segment $ab$ transversally. The homeomorphism is the 
identity everywhere except for in a small annulus containing $a$ and $b$. 
Denote by $\sigma_{ab}$ and $\sigma_{bc}$ the homeomorphisms, as 
defined in Figure~\ref{fig:sigma_mcg}, exchanging punctures $a$ with $b$, and 
punctures $b$ with $c$ respectively. We now classify the intersection symbols of balanced loops.

\begin{theorem}
  There is a bijection between isotopy classes of balanced loops in $\disk_3$ and 
  intersection symbols of the form $\intsymb{i}{j}{i+j}$, up to tetrahedron 
  permutation, with $i$, $j$ coprime non-negative integers.
  \label{thm:classificationsymb}
\end{theorem}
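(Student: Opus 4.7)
The plan is to establish the bijection via the action of $\mcg_3 \cong B_3$ on reduced balanced loops, using an Euclidean-algorithm-style induction anchored on a base case, and Lemma~\ref{lem:reduced_loop}(iii) to identify isotopy classes with intersection symbols.

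First I would treat the base case: the small loop $L_0$ enclosing two adjacent punctures (say $b$ and $c$) separates $\{b, c\}$ from $\{a, d\}$ and, by direct inspection, has intersection symbol $\intsymb{1}{0}{1}$ --- it crosses each of the four ``cross-partition'' edges $ab, bd, cd, ac$ once, and avoids the two ``within-partition'' opposite edges $bc$ and $ad$. This fits $\intsymb{i}{j}{i+j}$ with $(i,j)=(1,0)$, coprime. Next I would compute the action of the generators $\sigma_{ab}$ and $\sigma_{bc}$ of $\mcg_3$ on symbols of this form. Using Figure~\ref{fig:sigma_mcg}, a half-twist exchanging two punctures transforms normal-arc crossings near the twisted edge in a prescribed local way, and the resulting reduced loop has a symbol again of the form $\intsymb{i'}{j'}{i'+j'}$ with $(i',j')$ related to $(i,j)$ by a unimodular (``Euclidean'') move such as $(i,j) \mapsto (i+j, j)$, preserving coprimality; the inverse generators implement the reverse move $(i,j) \mapsto (i-j, j)$ when $i > j$. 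Since every coprime pair $(i,j)$ of non-negative integers is reachable from $(1,0)$ by such moves, every claimed symbol is realised by some balanced loop.

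Conversely, given an arbitrary reduced balanced loop $L$, Lemma~\ref{lem:loop_homeo} provides a homeomorphism in $\mathrm{Homeo}(\disk_3, \partial \disk)$ sending $L_0$ to $L$; its class in $\mcg_3$ is a word in $\sigma_{ab}, \sigma_{bc}$, and applying the action computed above shows that the symbol of $L$ is of the form $\intsymb{i}{j}{i+j}$ up to tetrahedron permutation (this permutation accounts for the three possible partitions of four vertices into two unordered pairs and for the initial freedom in labelling the two adjacent punctures of $L_0$). Coprimality is preserved throughout because the update is unimodular; moreover a hypothetical non-coprime symbol $\intsymb{di'}{dj'}{d(i'+j')}$ would be realised by $d$ parallel copies of the simple loop carrying $\intsymb{i'}{j'}{i'+j'}$, contradicting the simplicity of $L$. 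Injectivity of the symbol-to-class correspondence is immediate from Lemma~\ref{lem:reduced_loop}(iii).

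The main obstacle is the explicit local computation of the $\sigma$-actions on $\intsymb{i}{j}{i+j}$ symbols. This requires careful bookkeeping of how each half-twist rearranges normal arcs on the two faces meeting the twisted edge and propagates changes to the opposite edge, which is likely handled by a small finite case analysis (as hinted in the section introduction) distinguishing the possible local patterns of arcs crossing the twisted edge. Once this local update rule is established, the induction on $i+j$ is straightforward, and the various cases arising from the tetrahedron's vertex permutations are absorbed uniformly via the symmetry group action on intersection symbols described at the end of the ``System of polygonal cycles'' paragraph.
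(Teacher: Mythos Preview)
Your proposal is correct and follows essentially the same route as the paper: start from a base balanced loop, use Lemma~\ref{lem:loop_homeo} together with $\mcg_3\cong B_3$ to reduce to computing the action of the half-twist generators on symbols $\intsymb{i}{j}{i+j}$, verify via a finite local case analysis that this action is by Euclidean moves, and conclude coprimality and surjectivity by the Euclidean algorithm, with injectivity from Lemma~\ref{lem:reduced_loop}(iii). The paper carries out exactly the local case analysis you flag as the main obstacle (reducing via the tetrahedral symmetries to the single generator $\sigma_{ab}$ and its inverse, and distinguishing five crossing configurations), and its coprimality argument is your ``parallel copies'' observation phrased as the termination step of the Euclidean algorithm.
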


\begin{figure}[t]
\centering
\includegraphics[width=12cm]{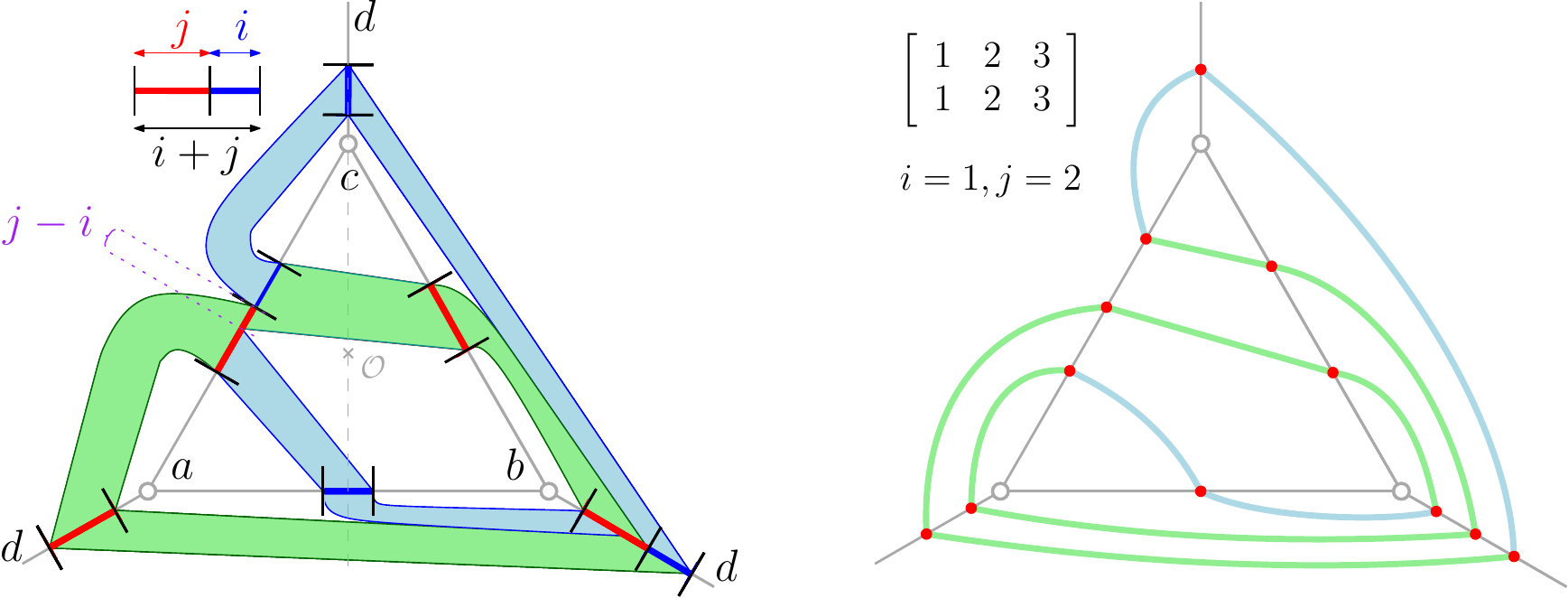}
\caption{Reduced loop, given by the intersection symbol $\intsymb{i}{j}{i+j}$ 
with $j \geq i$. Blue domains represent $i$ parallel segments and green domains 
represent $j$ parallel segments. Note in particular that the $j$ segments, in 
the green domain inside the central triangle, originating from edge $bc$ and 
crossing edge $ac$, split into $j-i$ segments crossing $ad$ and $i$ segments 
crossing $cd$. The center $\mathcal{O}$ and vertical axis are drawn in grey.}
\label{fig:planar_tet}
\end{figure}

\begin{proof}
Recall that we denote intersection symbols with two identical rows only by 
their first row $\intsymb{i}{j}{i+j}$ and, by symmetry, we suppose $j \geq i$.
The proof is separated into two parts (i) (intersection symbol equals
$\intsymb{i}{j}{i+j}$) and (ii) ($i$ and $j$ are coprime). 

(i) We first prove that balanced loops have intersection symbols 
$\intsymb{i}{j}{i+j}$, up to permutation, for \emph{arbitrary} $i,j \geq 0$. By 
virtue of Lemma~\ref{lem:loop_homeo}, any balanced loop $L$ in $\disk_3$ may be 
obtained via a homeomorphism $h\colon \disk_3 \to \disk_3$ from the reduced 
loop $L_0$ with intersection symbol $\intsymb{0}{1}{1}$. By virtue of the isomorphism $\mcg_3 \cong B_3$, this homeomorphism $h$ 
can be expressed as $g \circ f$, where $f$ is a composition of the 
homeomorphisms $\sigma_{ab}$ and $\sigma_{bc}$, and $g$ is isotopic to the 
identity. Because intersection symbols are defined up to isotopy with 
the identity, we focus on the action of the homeomorphisms $\sigma_{ab}$ and 
$\sigma_{bc}$ on the intersection symbol of an (balanced) loop. 
Recall that $\sigma_{ab}$ and $\sigma_{bc}$ are homeomorphisms that exchange punctures $a$ with $b$, and puncture $b$ with $c$ respectively, in the positive direction. We specify the homeomorphism $\sigma_{ab}$ exactly in Figure~\ref{fig:sigma_mcg} and show its action on curves intersecting edge $ab$ transversally. Note that these homeomorphisms are similar to \emph{Dehn twists} in the study of surface topology.

We prove the result 
inductively. The intersection symbol of the base case $L_0$ satisfies the 
property with $i=0$ and $j=1$. Suppose that we are given a reduced loop $L$ in 
$\disk_3$ with intersection symbol $\intsymb{i}{j}{i+j}$. 
Figure~\ref{fig:planar_tet} represents such loop. We study the action of 
$\sigma_{ab}$ and $\sigma_{bc}$ on the intersection symbol of such a loop. 

All tetrahedron permutations may be obtained by rotations of $2 \pi / 3$ around 
center $\mathcal{O}$ and reflections along the vertical axis passing through 
puncture $c$; see Figure~\ref{fig:planar_tet}. In order to reduce the study to 
a limited number of cases, note that exchanging any two punctures is equivalent 
to applying the appropriate rotation, exchanging the bottom two punctures and 
rotating back to the original position. Hence we study only the action of 
$\sigma_{ab}$ and $\sigma_{ab}^{-1}$. Additionally, applying $\sigma_{ab}$ to a 
loop $L$ or $\sigma_{ab}^{-1}$ to its reflection $L'$ along the vertical axis, 
is equivalent; specifically, denoting by $X$ a configuration and by 
$\overline{X}$ its reflection along the vertical axis, we have 
$\sigma_{ab}(\overline{X}) = \overline{\sigma_{ab}^{-1}(X)}$. For simplicity
we denote $\sigma_{ab}$ by $\sigma$ for the remainder of this proof.

Figure~\ref{fig:loop_ex} shows a detailed example of the action of $\sigma$ on 
a piece of loop traversing $ab $ transversally. Figure~\ref{fig:loops} pictures 
all possible cases of a piece of loop intersecting edge $ab$, together with a 
neighbourhood of the intersection. These configurations and their reflected 
versions appear on one of the edges $\{ab,bc,ac\}$ in Figure~\ref{fig:planar_tet}, picturing the the loop with intersection symbol $\intsymb{i}{j}{i+j}$ of the inductive argument. 
The action 
of $\sigma$ and $\sigma^{-1}$ is local in the sense that only a piece of the 
loop in a neighbourhood of the intersection with the edge $ab$ needs to be 
considered to reduce the loop after transformation. 
Additionally, the five configurations $A, O_1, O_2, O_3$ and $O_4$ of Figure~\ref{fig:loops} can be regarded independently, because the considered neighbourhoods do not overlap. Finally, the modification of the intersection symbol induced by the homeomorphisms $\sigma$ and $\sigma^{-1}$ on a single piece of loop is pictured by $2 \times 3$-matrices in Figure~\ref{fig:loops}.


\begin{figure}[t]
\centering
\includegraphics[width=13.8cm]{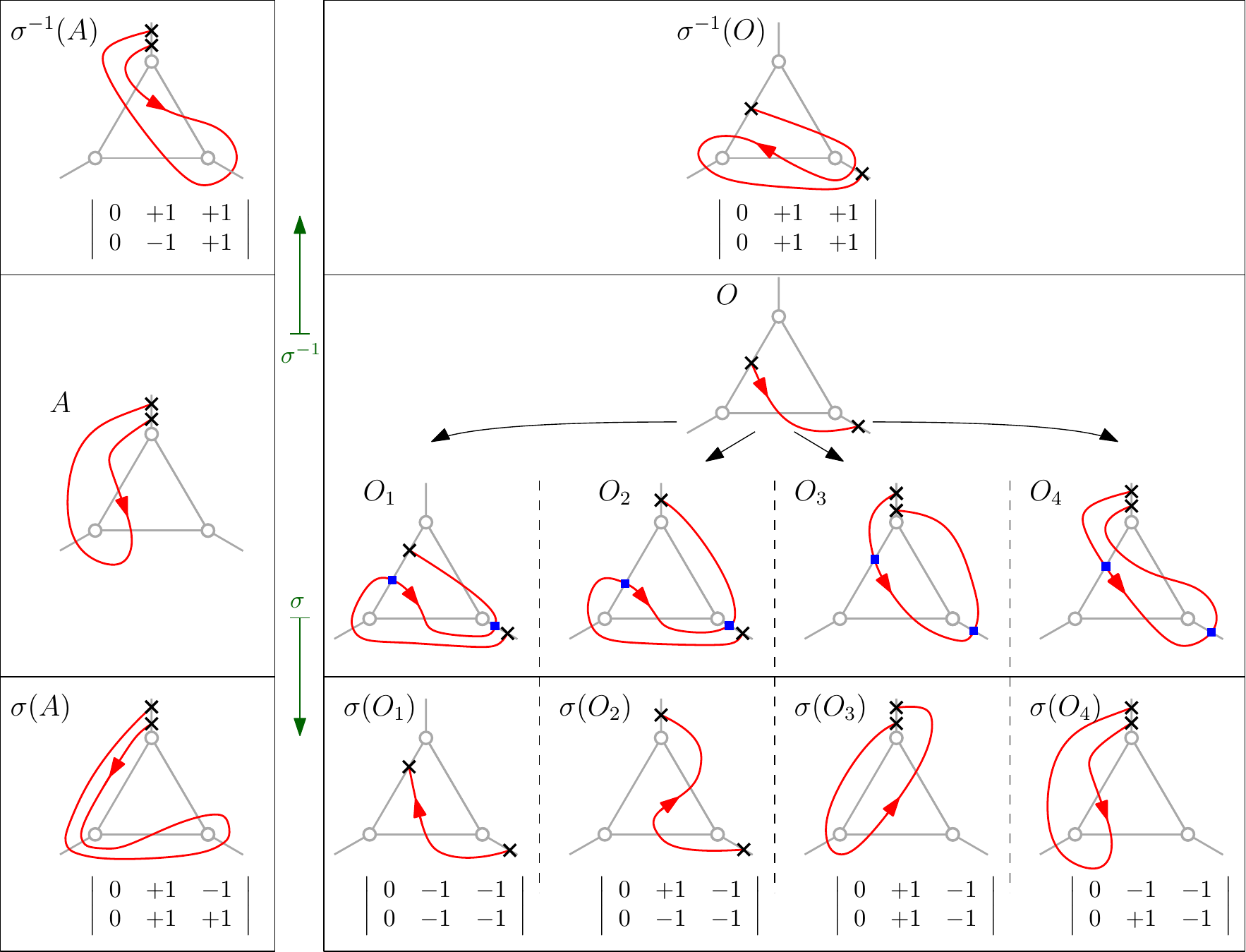}
\caption{Pieces of loop traversing edge $ab$. We distinguish two cases: case 
$A$ where the loop originates from edge $ac$, crosses $ab$ then $ad$, and case 
$O$ where the loop originates from edge $ac$, crosses $ab$ then $bd$. For the 
action of $\sigma$, we distinguish $4$ subcases $O_1$ to $O_4$ to case $O$, as 
the loop gets shortened by isotopy in order to maintain the property of being 
reduced (the original piece $O$ is delimited by blue squares in special cases 
$O_1$ to $O_4$). The black crosses represent fixed points for the action of 
$\sigma$, $\sigma^{-1}$ and the isotopy reduction; in particular, the loop is 
reduced when the piece between the black crosses does not cross twice the same 
edge, and the piece of loop enters or leave a fixed point with the same 
orientation. The loop pieces $\{A,O_1,O_2,O_3,O_4\}$ and their reflections cover
all possible crossing configurations found on an edge $\{ab,bc,ac\}$ of a loop with 
intersection symbol $\intsymb{i}{j}{i+j}$ as in Figure~\ref{fig:planar_tet}. 
}
\label{fig:loops}
\end{figure}

We apply the homeomorphisms $\sigma$ and $\sigma^{-1}$ to all three $2 \pi /3$ 
rotations of the intersection symbol $\intsymb{i}{j}{i+j}$ and express the 
transformation in terms of the matrices presented in Figure~\ref{fig:loops}. Due to the independence of configurations $A,O_1,O_2,O_3,O_4$, we can linearly sum the $2\times 3$ matrices for all pieces of loop intersecting edge $ab$. 
Below we list all crossing configurations with edge $ab$ for all permutations
of the intersection symbol; the transformation of the intersection symbol can
be calculated in the following way: 
\[
\begin{array}{lcll}
\intsymb{i}{j}{i+j} &\equiv& i \times O_1 & 
    \left\{ \begin{array}{cl}
              \xmapsto{\ \ \sigma \ \ } & \intsymb{i}{j-i}{j} \\
              \xmapsto{\  \sigma^{-1} } & \intsymb{i}{i+j}{2i+j} \\
            \end{array}
    \right. \\
&&&\\
\intsymb{i+j}{i}{j} &\equiv& i \times (A + \overline{A}) + (j-i) \times O_3 & 
    \left\{ \begin{array}{cl}
              \xmapsto{\ \ \sigma \ \ } & \intsymb{i+j}{2i+j}{i} \\
              \xmapsto{\ \sigma^{-1} } & \intsymb{i+j}{j}{i+2j} \\
            \end{array}
    \right. \\
&&&\\
\intsymb{j}{i+j}{i} &\equiv& \begin{array}{ll}
                    \max(0,2i-j) \times \overline{O_1} + \max(0,j-2i) \times 
                    \overline{O_3} \\ 
                    + \min(i,j-i) \times (\overline{O_2} + \overline{O_4}) \\
                    \end{array} & 
    \left\{ \begin{array}{cl}
              \xmapsto{\ \ \sigma \ \ } & \intsymb{j}{i+2j}{i+j} \\
              \xmapsto{\ \sigma^{-1} } & \intsymb{j}{i}{j-i} \\
            \end{array}
    \right. \\
\end{array}
\]
Note that the more intricate case study ($j > 2i$ or not) of the case 
$\intsymb{j}{i+j}{i}$ is due to pieces of loop intersecting edges with a 
``split'' twice in the neighbourhood considered, explaining the $2^2 = 4$ subcases 
(see the ``split'' intersection patterns on
edges $bd$ and $ac$ in Figure~\ref{fig:planar_tet}).
In summary, applying $\sigma$ preserves the fact that the intersection symbol is
of the form $\intsymb{i}{j}{i+j}$, up to permutation.

(ii) We now prove that $i$ and $j$ are coprime integers. Note that we can 
simulate the Euclidean algorithm on the pair $(j,i)$ by applying $\sigma$. 
W.l.o.g., consider the intersection symbol $\intsymb{i}{j}{i+j}$, $j \geq i$. 
Note that, if $j > 1$ then $j > i$: assuming otherwise $i=j>1$, then
applying $\sigma$ induces the symbol $\intsymb{i}{0}{i}$ which represents 
multiple parallel disjoint copies of the loop 
$\intsymb{1}{0}{1}$. Consider the division $j = q i + r$, with $r < i$. 
Applying $\sigma$ $q$ times to $\intsymb{i}{j}{i+j}$ gives $\intsymb{i}{r}{i+r}$
with $r < i$. Hence we can recursively apply the Euclidean algorithm on the pair
$(i,r)$. Because of the property that $j \neq i$, the algorithm terminates on 
the pair $(1,0)$ and $\gcd(i,j)=1$. Conversely, for any coprime integers 
$(p,q)$, running the Euclidean algorithm in reverse gives us the sequence of 
homeomorphisms $\sigma_{ab}$, $\sigma_{bc}$, and their inverses to obtain 
the loop $\intsymb{p}{q}{p+q}$ from the loop $\intsymb{0}{1}{1}$.

The bijection now follows by virtue of Lemma~\ref{lem:reduced_loop}. 
\qed \end{proof}

\begin{figure}[t]
\centering
\includegraphics[width=10cm]{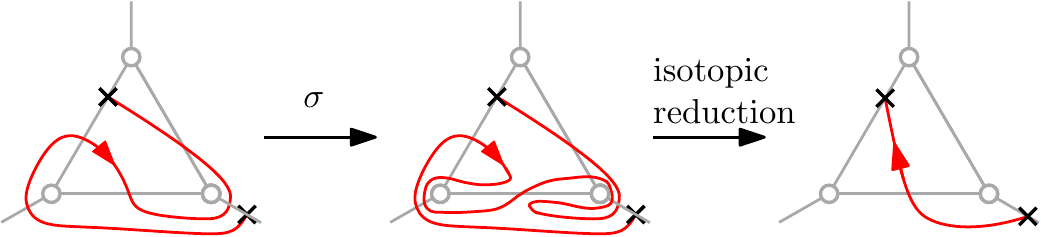}
\caption{Action of $\sigma_{ab}$, with isotopic reduction, on a piece of loop and its neighbourhood (we call this loop configuration $O_1$).}
\label{fig:loop_ex}
\end{figure}


\begin{theorem}
\label{thm:poly}
An admissible tetrahedron colouring corresponds to a family of polygonal curves 
with intersection symbols
\[
a \times \left[ \begin{array}{ccc} 1 & 0 & 1 \\ 0 & 1 & 0 \\ \end{array} \right] +
b \times \left[ \begin{array}{ccc} 1 & 1 & 0 \\ 0 & 0 & 1 \\ \end{array} \right] +
c \times \left[ \begin{array}{ccc} 0 & 1 & 1 \\ 1 & 0 & 0 \\ \end{array} \right] +
d \times \left[ \begin{array}{ccc} 0 & 0 & 0 \\ 1 & 1 & 1 \\ \end{array} \right]
\]
for arbitrary integers $a,b,c,d \geq 0$, and $p \geq 0$ copies of the {\bf same}
polygonal curve with intersection symbol
\[
\left[ \begin{array}{ccc} i & j & i+j \\ i & j & i+j \\ \end{array} \right] \ \text{or} \
\left[ \begin{array}{ccc} i+j & i & j \\ i+j & i & j \\ \end{array} \right] \ \text{or} \
\left[ \begin{array}{ccc} j & i+j & i \\ j & i+j & i \\ \end{array} \right]
\] 
for arbitrary coprime integers $i,j \geq 0$, not both $0$. Integers $a,b,c,d,p,
i,j$ are defined such that the bound $r-2$ on edge colourings is respected.
\label{thm:mainresult}
\end{theorem}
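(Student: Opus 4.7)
The plan is to combine Theorem~\ref{thm:normalarcs} with the classification results developed above (Lemmas~\ref{lem:reduced_loop}, \ref{lem:loop_homeo} and Theorem~\ref{thm:classificationsymb}) to decompose the polygonal cycles coming from an admissible colouring into topological ``building blocks'' whose intersection symbols are known.

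First I would view the boundary of the tetrahedron $t$, with its four vertices removed, as $\disk_3$ (sending one vertex to $\partial \disk$ and leaving the other three as interior punctures, as in Figure~\ref{fig:unfold_tet}). By Theorem~\ref{thm:normalarcs}, an admissible colouring of $t$ corresponds to a finite collection of pairwise disjoint simple closed polygonal cycles on $\partial t$, i.e., a finite disjoint union of simple loops in $\disk_3$. By the Jordan--Schoenflies trichotomy stated before Lemma~\ref{lem:loop_homeo}, every loop in the collection is of type (i) (trivial, no puncture inside), (ii) (exactly one puncture inside), or (iii) (balanced: two punctures inside, two outside, where $\partial \disk$ is counted as a puncture). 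Trivial loops have symbol $\intsymb{0}{0}{0}$ and do not contribute. There are exactly four isotopy classes of type (ii) loops, one around each of the four punctures, and a direct computation of the crossings with the six tetrahedron edges (each such loop crosses once the three edges meeting at the relevant vertex) yields precisely the four matrices attached to the coefficients $a,b,c,d$ in the statement; the integers $a,b,c,d$ then simply count the number of type (ii) loops of each kind in the collection.

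For type (iii) loops, Theorem~\ref{thm:classificationsymb} provides that each balanced loop has intersection symbol of the form $\intsymb{i}{j}{i+j}$ with $\gcd(i,j)=1$, up to one of the three tetrahedron rotations appearing in the statement. The main technical step, and the principal obstacle, is to show that every balanced loop appearing in the collection is isotopic to the \emph{same} balanced loop, so that their $p$ copies all contribute the same intersection symbol. To prove this, suppose $L_1, L_2$ are two disjoint balanced loops in $\disk_3$. By Jordan--Schoenflies, $L_1$ bounds a disk in $\disk$ containing exactly two of the four punctures, and $L_2$, being disjoint from $L_1$, lies either in this inside disk or in the annular outside. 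In the first case, the inside disk of $L_2$ is a sub-disk with at most two punctures; if it contains $0$ or $1$ puncture then $L_2$ is of type (i) or (ii), contradicting balancedness, and if it contains both punctures of $L_1$'s inside, then $L_1$ and $L_2$ cobound an unpunctured annulus and are isotopic. The outside case is symmetric: either $L_2$ bounds an unpunctured or $1$-punctured sub-disk (contradiction), or its inside contains $L_1$ together with the two punctures of $L_1$'s inside only, and again $L_1, L_2$ cobound an unpunctured annulus.

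Finally, since the loops in the collection are pairwise disjoint and each intersection point of a cycle with a tetrahedron edge belongs to exactly one cycle, the intersection symbol of the whole colouring is the sum of the symbols of its components. Adding up the type (ii) contributions (four possible matrices with non-negative multiplicities $a,b,c,d$) and the $p$ parallel copies of the single balanced isotopy class (one of the three listed rotations of $\intsymb{i}{j}{i+j}$ with $\gcd(i,j)=1$, not both zero) gives the decomposition claimed in the theorem. The bound $r-2$ on each edge colour is inherited directly from the upper bound constraint of admissibility, and this determines the allowed range of $a,b,c,d,p,i,j$.
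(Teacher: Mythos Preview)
Your proposal is correct and follows essentially the same route as the paper: invoke Theorem~\ref{thm:normalarcs} to obtain a disjoint collection of simple loops on $\partial t\cong\disk_3$, split them into types (i)--(iii), read off the four type~(ii) symbols directly, use Theorem~\ref{thm:classificationsymb} for the balanced ones, and then argue that any two disjoint balanced loops cobound an unpunctured annulus and are therefore isotopic with equal intersection symbol (Lemma~\ref{lem:reduced_loop}(ii)). The paper carries out exactly this annulus argument, only it disposes of your inside/outside case split by a single ``w.l.o.g.\ $L_2$ lies inside $L_1$'' (which is just your outside case after swapping the roles of $L_1$ and $L_2$); it also adds the brief converse remark that arbitrarily many parallel copies of each admissible loop type can indeed be realised disjointly, which you leave implicit.
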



\begin{proof}
The seven intersection symbols are exactly the ones of the isotopy classes of 
loops in $\disk_3$, and hence exactly the ones of all possible polygonal curves 
on the boundary of a tetrahedron (Theorem~\ref{thm:classificationsymb}). 
In $\disk_3$, we can draw an arbitrary number 
of loops separating one puncture from the three others, by drawing them in a 
close neighbourhood of the puncture they isolate. These loops are exactly the 
ones with one of the four first intersection symbols of the theorem.

We prove that a tetrahedron $t$ can have only one type of polygonal cycle with 
one of the three last intersection symbols of the theorem. Take two such 
polygonal loops; by definition, they are disjoint on $\partial t$. Let $L_1$ 
and $L_2$ be the corresponding two pairwise disjoint loops in $\disk_3$; they 
are balanced by definition. Because they are disjoint, suppose, w.l.o.g., that $L_2$ is 
contained in the inside of $L_1$. As $L_1$ and $L_2$ both contain two 
punctures in their inside, the ``band'' between $L_1$ and $L_2$ (i.e. the 
inside of $L_1$ minus the inside of $L_2$) contains no puncture, and is 
then a topological annulus, with boundary $L_1 \sqcup L_2$. Sliding $L_2$ along 
the annulus gives an isotopy between $L_1$ and $L_2$ that is constant outside 
(a neighbourhood of) the annulus, and in particular fixes the punctures. 
Consequently, $L_1$ and $L_2$ have same intersection symbol (Lemma~\ref{lem:reduced_loop}(ii)).
Conversely, given an balanced loop $L$, an arbitrary number of loops with 
same intersection symbol can be drawn in a neighbourhood of $L$.
\qed \end{proof}

In conclusion, Theorem~\ref{thm:mainresult} gives an explicit characterisation 
of admissible tetrahedron colourings in terms of polygonal cycles.
We use this ``system of coordinates'' to re-write the formulae
for the weights of the tetrahedra as defined in 
Section~\ref{app:tv-weights} for the Turaev-Viro invariant at $\SL_2(\C)$. 
Namely, we have the following observation.
%
%
%
%
%

\begin{theorem}
  \label{thm:weights}
  Let $t = \{a,b,c,d\}$ be a tetrahedron, coloured as in 
  Theorem~\ref{thm:mainresult}, i.e. with $a$ (respectively $b$, $c$ and $d$) 
  copies of a $3$-cycle around vertex $a$ (respectively, around vertices $b$, 
  $c$ and $d$), and $p$ copies of the balanced loop $\intsymb{i}{j}{i+j}$. 
  Then, we can express the weight of $t$ as
  \[
  |t| = (-1)^X \sum_{0 \leq z \leq d} \frac{(-1)^z [X - z +1]!}{[a-z]![b-z]![c-z]![d-z]![pi+z]![pj+z]![z]!},
  \]
  where $X := p(i+j)+a+b+c+d$ and $y := \min \{a,b,c,d\}$, and, if $y=0$,
  \[
  |t| = \frac{[X +1]!}{[a]!\ [b]!\ [c]!\ [d]!\ [pi]!\ [pj]!}.
  \]
\end{theorem}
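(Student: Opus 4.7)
The plan is to substitute the parametrisation from Theorem~\ref{thm:mainresult} directly into the Turaev--Viro tetrahedron weight of Section~\ref{app:tv-weights} and simplify. First, I express each half-integer edge colour $i_0,\ldots,i_5$ in terms of $(a,b,c,d,p,i,j)$: the small $3$-cycle around a vertex $v$ adds $\frac{1}{2}$ to the colour of every edge incident to $v$, while each copy of the balanced loop contributes $\frac{pi}{2}$, $\frac{pj}{2}$, $\frac{p(i+j)}{2}$ to the three opposite-edge pairs. Fixing the correspondence $i_0 \leftrightarrow ab$, $i_1 \leftrightarrow bc$, $i_3 \leftrightarrow ac$ forces the remaining labels by the opposite-edge convention of Figure~\ref{fig:tet}.

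Second, I compute the four triangle sums and three quadrilateral sums that drive $\tau_\theta(t,z)$ and $\kappa_\theta(t,z)$. A triangle is incident to exactly three of the four tetrahedron vertices; summing its three edge colours, each of those three vertex-loop counts is counted twice (each incident $3$-cycle crosses two of the three edges), while the three contributions $pi$, $pj$, $p(i+j)$ of the balanced loop appear exactly once and sum to $2p(i+j)$. Halving yields each triangle sum as $X - v$, where $v$ is the remaining vertex count and $X = p(i+j) + a + b + c + d$. Similarly, the three quadrilateral sums contain each of $a,b,c,d$ exactly once along with one of $\{2pi, 2pj, 2p(i+j)\}$, giving after halving the values $X + pi$, $X + pj$, and $X$. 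Therefore $\zmax = X$ and $\zmin = X - y$, where $y = \min\{a,b,c,d\}$.

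Third, I apply the change of variable $z' := X - z$ to the weight sum. This reverses the range to $0 \leq z' \leq y$, pulls a global sign $(-1)^X$ outside (with $(-1)^{z'}$ remaining inside), replaces $[z+1]!$ by $[X-z'+1]!$, turns the four factors of $\tau_\theta(t,\cdot)$ into $[a-z']!\,[b-z']!\,[c-z']!\,[d-z']!$, and turns the three factors of $\kappa_\theta(t,\cdot)$ into $[z']!\,[pi+z']!\,[pj+z']!$. This is precisely the stated general formula after relabelling $z'$ back to $z$. The special case $y = 0$ is the single-term specialisation of this formula at $z = 0$.

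The main difficulty is purely bookkeeping: ensuring that the identification of the labels $\{i_0,\ldots,i_5\}$ with edges of the tetrahedron remains consistent with the orientation of the balanced loop chosen by Theorem~\ref{thm:mainresult}. Reassuringly, the three quadrilateral sums enter the formula only through their minimum, and this minimum equals $X$ regardless of the labelling (since $i,j,p \geq 0$), so the simplified weight is invariant under any relabelling of opposite-edge pairs. Quick sanity checks at $p = 0$ (only vertex $3$-cycles) or at $a = b = c = d = 0$ (only $p$ parallel copies of one balanced loop) provide independent verification.
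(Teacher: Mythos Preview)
Your proposal is correct and follows essentially the same route as the paper's proof: both substitute the new coordinates into the edge colours, read off the four triangle sums as $X-v$ and the three quadrilateral sums as $X$, $X+pi$, $X+pj$, and then change variable in the weight formula. The only cosmetic difference is that the paper performs two successive substitutions (a shift by $z^-$, then $z \mapsto y-z$) whose composition is precisely your single substitution $z' = X - z$.
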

 

\begin{proof}
  The proof consists of an explicit calculation. 
  Given a coloured tetrahedron $t$ with intersection symbol as in 
  Figure~\ref{fig:colour_formulae}. Denote its set of colours by $\Phi$. 
  Note that the ``edge colours'' on the picture are integers---as in the 
  definition of the intersection symbol---and must be divided by two to fit 
  the definition of the Turaev-Viro invariant in Section~\ref{app:tv-weights}, 
  using half-integers.

\begin{figure}[t]
\centering
\includegraphics[width=10cm]{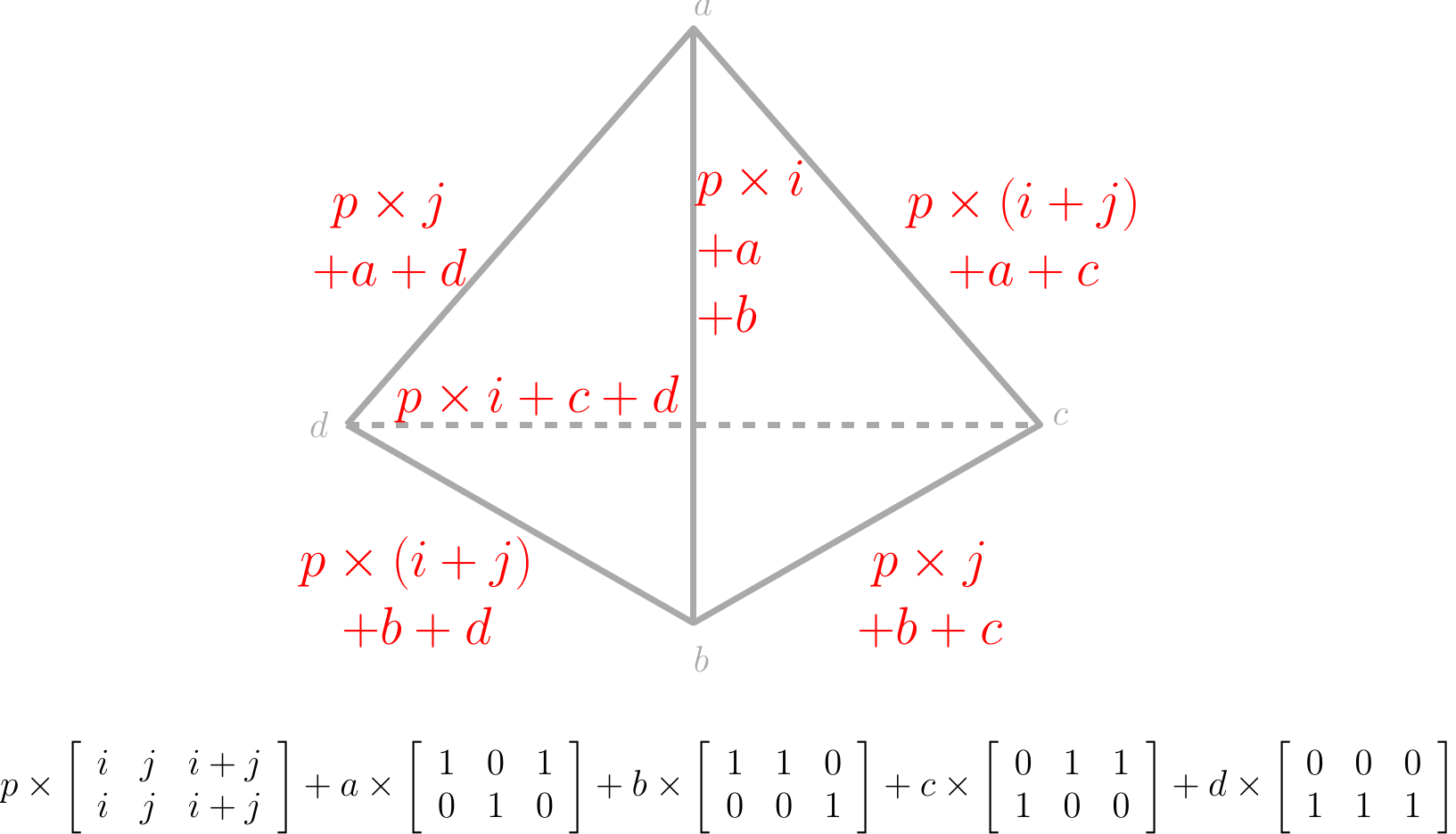}
\caption{Colouring a tetrahedron using the coordinate system of Theorem~\ref{thm:mainresult}. }
\label{fig:colour_formulae}
\end{figure}

  In Section~\ref{app:tv-weights}, $z^-$ and $z^+$ are defined as the maximum 
  values of the sum of edge colours of a triangular face, and the minimum values
  of the sum of the edge colours of a quad (i.e., all edges but two opposite 
  ones). Summing the colours, divided by two yields
  \[
\text{Triangles}: \ \left\{ 
  \begin{array}{l}
    p \times (i+j) + a+b+c\\
    p \times (i+j) + a+b+d\\
    p \times (i+j) + a+c+d\\
    p \times (i+j) + b+c+d\\
  \end{array}
\right.
\ \ \text{Quads}: \ \left\{ 
  \begin{array}{l}
    p \times (i+j) + p\times j + a+b+c+d \\
    p \times (i+j) + p\times i + a+b+c+d \\
    p \times (i+j) + a+b+c+d .\\
  \end{array}
\right.
  \]
Let $y := \min \{a,b,c,d\}$. Consequently, $z^- = p \times (i+j) + a+b+c+d - y$ and $z^+ = p \times (i+j) + a+b+c+d$. Replacing the variable $z$, in the definition of a tetrahedron weight in Section~\ref{app:tv-weights}, by $z- z^-$ we obtain
\[
|t|_{\Phi} = \sum_{0 \leq z \leq y}  \frac{(-1)^{p \times (i+j) + a+b+c+d } \times (-1)^{z-y}  [z+ (p \times (i+j) + a+b+c+d -y) +1]!}{[z+a-y]![z+b-y]![z+c-y]![z+d-y]!  \times  [y+pi-z]![y+pj-z]![y-z]!} ;
\]
and introducing $X := p \times (i+j) + a+b+c+d $, and substituting $z$ by $y-z$ results in
\[
|t|_{\Phi} = (-1)^{X} \sum_{0 \leq z \leq y}  \frac{ (-1)^{z}  [X - z +1]!}{[a-z]![b-z]![c-z]![d-z]!  \times  [pi+z]![pj+z]![z]!} .
\]
\qed \end{proof}

\subsection{Admissible colourings as embedded surfaces}
In this section, we give a more intuitive and 
topological understanding of admissible colourings in terms of embedded surfaces within the triangulation. By interpreting the polygonal cycles as intersection patterns of these embedded surfaces with the boundary $\partial t$ of the tetrahedron $t$, an admissible colouring can be seen as a family of embedded surfaces.

This approach is a powerful tool in computational topology of $3$-manifolds. It is in particular of key importance in the unknot recognition algorithm~\cite{hass99-knotnp}---using \emph{normal surfaces}---and in the $3$-sphere recognition algorithm~\cite{rubinstein97-3sphere}---using \emph{almost normal surfaces}. Normal surfaces consider embedded surfaces cutting through $\partial t$ with $3$-gons, and $4$-gons $\intsymb{0}{1}{1}$. Almost normal surfaces allow, in addition, $8$-gons $\intsymb{1}{1}{2}$. Theorem~\ref{thm:poly} states that Turaev-Viro invariants, for $r$ sufficiently large, consist of formulae involving weights defined on much more complicated surface pieces, with intersection symbol $\intsymb{i}{j}{i+j}$. These intersections are ``helicoidal'' surface pieces of {\em higher index}; see \cite{Bachmann12Helical} for a recent study on embedded surfaces containing these pieces. 

The efficient algorithm for computing the Turaev-Viro invariant for $r=3$ is based on a relation between $\tv_3$ and $2$-homology, and can be interpreted in terms of embedded surfaces. A generalisation of this idea to design more efficient algorithms for arbitrary $r>3$, using the classification of embedded surface pieces from this section, is subject of ongoing research.

\section{Bounds on the number of admissible colourings}
\label{sec:bounds}

Given a $3$-manifold triangulation $\tri$ with $v$ vertices, $n+v$ edges, $2n$ triangles and $n$ tetrahedra (the relations between number of faces follows from an Euler characteristic argument), 
and following the definitions in Section~\ref{ssec:tv} 
above, there are at most $(r-1)^{n+v}$ admissible colourings for $\tv_{r,q}$.
This bound is usually far from being sharp. However, current enumeration
algorithms for admissible colourings do not try to capitalise on this fact
in a controlled fashion. 

In this section we discuss improved upper bounds on the number of admissible 
colourings in important special cases. Moreover, we give a number of 
examples where these improved bounds are actually attained. The bounds are
used in Section~\ref{sec:algo} to construct a significant exponential speed-up
for the computation of $\tv_{r,1}$ where $r$ is odd. 

Note that in the following, we go back to the convention of using half-integers for the colourings of edges.

\subsection{Vertices and the first Betti number}

First let us have a look at some triangulations where we can a priori expect
a rather large number of colourings.

\begin{proposition}
  \label{prop:betav}
  Let $\tri$ be a $3$-manifold triangulation with $v$ vertices. Then
  $$ | \adm(\tri,3) | = 2^{v+\beta_1(\tri,\Z_2)-1}. $$ 
\end{proposition}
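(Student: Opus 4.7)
The plan is to identify admissible colourings in $\adm(\tri,3)$ with $\Z_2$-valued $1$-cocycles of $\tri$, and then count the latter using standard linear algebra in (co)homology.

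First, I would analyse the local admissibility conditions for $r=3$. Here the colour set is $I=\{0,1/2,1\}$, and the upper bound constraint forces $\theta(e_1)+\theta(e_2)+\theta(e_3)\leq 1$ on every triangle, while the parity condition forces this sum to lie in $\Z$. The sum therefore equals $0$ or $1$. Combining this with the triangle inequalities rules out triples of the form $(1,0,0)$ and permutations (since $1\not\leq 0$). Hence on each triangle either all three edges are coloured $0$, or exactly two of them are coloured $1/2$ and the third is coloured $0$. In particular, the colour $1$ never actually occurs in an admissible colouring of $\tri$ for $r=3$.

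Next I would set up the identification with cohomology. Given $\theta\in\adm(\tri,3)$, define $\alpha_\theta\in\Chains^1(\tri,\Z_2)$ by $\alpha_\theta(e)=2\theta(e)\pmod 2$. The local characterisation above says precisely that on every triangle $f$ an even number of bounding edges are sent to $1$, i.e.\ $(\delta^2\alpha_\theta)(f)=\alpha_\theta(\partial f)=0$ in $\Z_2$. Thus $\alpha_\theta\in\Cycles^1(\tri,\Z_2)$. Conversely, every $1$-cocycle arises this way by reading off the $\Z_2$-values as $0$ or $1/2$, and this correspondence is clearly bijective. Hence $|\adm(\tri,3)|=|\Cycles^1(\tri,\Z_2)|$.

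Finally I would compute $\dim_{\Z_2}\Cycles^1(\tri,\Z_2)$. From $\Hom^1(\tri,\Z_2)=\Cycles^1(\tri,\Z_2)/\Boundaries^1(\tri,\Z_2)$ one gets
\[
\dim\Cycles^1(\tri,\Z_2)=\beta_1(\tri,\Z_2)+\dim\Boundaries^1(\tri,\Z_2).
\]
Now $\Boundaries^1(\tri,\Z_2)$ is the image of $\delta^1\co\Chains^0(\tri,\Z_2)\to\Chains^1(\tri,\Z_2)$, so by the rank–nullity theorem its dimension equals $v-\dim\ker\delta^1=v-\dim\Hom^0(\tri,\Z_2)$. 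Since $\tri$ triangulates a connected $3$-manifold, $\Hom^0(\tri,\Z_2)\cong\Z_2$ has dimension $1$, and therefore $\dim\Boundaries^1(\tri,\Z_2)=v-1$. Putting everything together yields
\[
|\adm(\tri,3)|=|\Cycles^1(\tri,\Z_2)|=2^{\,v+\beta_1(\tri,\Z_2)-1},
\]
as claimed.

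There is no real obstacle here; the only subtlety is being careful with the rank–nullity count and with the tacit connectedness assumption on $\tri$ (which is standard since a closed $3$-manifold is connected by convention, or the argument can be repeated per component).
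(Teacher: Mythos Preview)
Your proposal is correct and follows essentially the same approach as the paper: both establish a bijection between $\adm(\tri,3)$ and the set of $\Z_2$-valued $1$-cocycles, then count cocycles as $2^{\beta_1}$ cohomology classes times $2^{v-1}$ coboundaries. Your version is, if anything, a little more careful than the paper's in explicitly ruling out the colour~$1$ and in invoking rank--nullity (and connectedness) for the coboundary count, where the paper simply asserts that the number of $1$-coboundaries equals the number of even-cardinality vertex subsets.
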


\begin{proof}
  Every colouring $\theta \in \adm(\tri,3)$ can be associated to a $1$-cocycle 
  $c_{\theta}$ of $\tri$ over the field with two elements $\Z_2$: Simply define 
  $c$ to be the $1$-cocycle evaluating to $c(e) = 1$ on edges coloured 
  $\theta(e) = 1/2$, and to $c(e) = 0$ on edges coloured $\theta(e) = 0$.
  Conversely, every $1$-cocycle $c$ defines an admissible 
  colouring $\theta_c \in \adm(\tri,3)$. By construction two admissible 
  colourings are distinct if and only if their associated $1$-cocycles are 
  distinct. Hence, the number of admissible colourings $| \adm(\tri,3) |$
  must equal the number of $1$-cocycles of $\tri$.

  First of all $\tri$ has $2^{\beta_1(\tri,\Z_2)}$ $1$-cohomology classes.
  Moreover for every $1$-cocycle $c$ we can find a homologous---but distinct---$1$-cocycle 
  $c'$ by adding a non-zero $1$-coboundary to $c$. The statement now
  follows by noting that the number of $1$-coboundaries of any triangulation 
  $\tri$ equals the number of subsets of vertices of even cardinality, which is 
  $2^{v-1}$.
\qed \end{proof}

\begin{table}[t]
  \begin{center}
    \begin{tabular}{|r|r|r|}
      \hline
      $n$&$\#$ trigs.&$\#$ sharp~(\ref{eq:short}) \\
      \hline
      \hline
      $1$&$1$&$1$ \\
      \hline
      $2$&$4$&$1$ \\
      \hline
      $3$&$24$&$4$ \\
      \hline
      $4$&$160$&$4$ \\
      \hline
      $5$&$1492$&$14$ \\
      \hline
      $6$&$16731$&$22$ \\
      \hline
    \end{tabular}
    \medskip
    \caption{Number ``$\#$ trigs.'' of $1$-vertex triangulations $\tri$ of manifolds with 
      $\beta_1 (\tri,\Z_2) = 1$ and $n$ tetrahedra, $1 \leq n \leq 6$, and
      number of cases of equality ``$\#$ sharp~(\ref{eq:short})'' in the bound from 
      Theorem~\ref{prop:req4}.}
    \label{tab:small}
  \end{center}
\end{table}


Proposition~\ref{prop:betav} is a basic but very useful observation with 
consequences for $\adm(\tri,4)$. This is particularly exiting as 
computing $\tv_{4,1}$ is known to be $\#P$-hard. More precisely, the following
statement holds.

\begin{theorem}
  \label{prop:req4}
  Let $\tri$ be an $n$-tetrahedron $3$-manifold triangulation with $v$ vertices, 
  and let $\theta \in \adm(\tri,3)$. Furthermore, let $\ker_{\theta}$ be the 
  number of edges coloured $0$ by $\theta$. Then
  \begin{align}
    | \adm(\tri,4) | &\leq \left ( \Sigma_{ \theta \in \adm(\tri,3) \setminus \{{\bf 0}\}} 
      2^{\ker_{\theta}} \right )+ 2^{v+\beta_1(\tri,\Z_2)-1} \label{eq:long}  \\
    &\leq (| \adm(\tri,3) |-1) (2^{n+v-1} + 1) + 1, \label{eq:short} 
  \end{align}
  where $\mathbf{0}$ denotes the zero colouring. Moreover, both bounds are sharp.
\end{theorem}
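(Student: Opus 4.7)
The plan is to partition $\adm(\tri, 4)$ via the reduction map $\theta \mapsto \theta'$ of Proposition~\ref{prop:reduction} and to bound the number of lifts of each $\theta' \in \adm(\tri, 3)$ separately. An edge with $\theta'(e) = 1/2$ lifts to $\theta(e) \in \{1/2, 3/2\}$ and an edge with $\theta'(e) = 0$ lifts to $\theta(e) \in \{0, 1\}$, so a priori each $\theta'$ has $2^{n+v}$ candidate lifts; the point is to cut these down using the triangle inequalities and the upper-bound constraint from $\adm(\tri, 4)$, and then assemble the bound by summing over $\theta'$.

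The core step is a short triangle-by-triangle analysis. Since $\theta' \in \adm(\tri, 3)$, the parity condition forces every triangle of $\tri$ to have either $0$ or $2$ half-integer edges under $\theta'$. If a triangle has two half-integer edges $e_1, e_2$ and one integer edge $e_3$, I enumerate the $2\cdot 2\cdot 2 = 8$ lift possibilities against the triangle inequality and the upper bound $\theta(e_1) + \theta(e_2) + \theta(e_3) \leq r - 2 = 2$; only $(1/2, 1/2, 0)$ and $(1/2, 1/2, 1)$ survive, since any $3/2$ on a half-integer edge either violates the triangle inequality or blows the upper bound. Hence every half-integer edge of $\tri$ is forced to colour $1/2$. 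The same enumeration applied to triangles with zero half-integer edges shows that the edges coloured $1$ must meet the triangle in either $0$ or $2$ edges. For $\theta' \neq \mathbf{0}$ these constraints yield at most $2^{\ker_{\theta'}}$ lifts; for $\theta' = \mathbf{0}$ every triangle is all-integer, so the set of $1$-coloured edges is a $\Z_2$-valued $1$-cocycle and the fibre has exactly $2^{v + \beta_1(\tri, \Z_2) - 1}$ elements by the count used in the proof of Proposition~\ref{prop:betav}. Summing over $\theta'$ gives~(\ref{eq:long}); substituting the crude bound $\ker_{\theta'} \leq n + v - 1$ for $\theta' \neq \mathbf{0}$ together with $|\adm(\tri, 3)| = 2^{v + \beta_1(\tri, \Z_2) - 1}$ and collecting terms then yields~(\ref{eq:short}).

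Sharpness will be handled by exhibiting triangulations in which every inequality in the argument is attained with equality---in particular, (generalised) triangulations in which each non-zero $\theta' \in \adm(\tri, 3)$ colours exactly one edge by $1/2$ and in which every binary assignment on the remaining $n+v-1$ integer edges extends to an admissible colouring of $\tri$. The sharp counts in Table~\ref{tab:small} indicate that such examples already occur in the census of one-vertex triangulations with $\beta_1(\tri, \Z_2) = 1$ for each $n \leq 6$; a concrete one-tetrahedron example provides the base case and the remaining entries can be verified directly. The main obstacle is the triangle case analysis of the previous paragraph: one must verify that the enumeration of local configurations is exhaustive, and that the self-identifications of edges and triangles allowed in generalised triangulations do not alter the conclusion that $3/2$ never appears in an admissible lift.
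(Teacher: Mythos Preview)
The proposal is correct and follows essentially the same partition-by-reduction argument as the paper: split $\adm(\tri,4)$ into fibres over $\adm(\tri,3)$, count the fibre over $\mathbf{0}$ exactly via Proposition~\ref{prop:betav}, and bound the remaining fibres by $2^{\ker_{\theta'}}$. One simplification you missed: for $r=4$ the colour set is $I=\{0,1/2,1\}$, so $3/2$ is never a candidate lift at all---the paper simply observes that the half-integer edges of $\theta$ coincide with those of $\theta'$, making your triangle-by-triangle elimination of $3/2$ (and the worry about self-identifications in generalised triangulations) unnecessary.
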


\begin{proof}
  Let $\theta \in \adm (\tri,4)$, and let $\theta'$ be its reduction, as defined in 
  Proposition~\ref{prop:reduction}. If
  $\theta'$ is the trivial colouring (that is, if no colour of $\theta$ is 
  coloured by $1/2$) the colouring $\theta / 2$, obtained 
  by dividing all of the colours of $\theta$ by two, must be in 
  $\adm (\tri, 3)$. It follows from Proposition~\ref{prop:betav} that
  exactly $2^{v+\beta_1(\tri,\Z_2)-1}$ colourings in $\adm (\tri,4)$ reduce
  to the trivial colouring.

  If $\theta'$ is not the trivial colouring then $\theta$ colours some edges
  by $1/2$. In particular it is not the trivial colouring. Since the only 
  colours in $\theta$ are $0$, $1/2$, and $1$, all edges coloured by $1/2$ 
  in $\theta$ are coloured by $1/2$ in $\theta'$ and vice versa. Thus,
  $\ker_{\theta'}$ denotes all edges coloured by $0$ or $1$ in $\theta$.
  Naturally, there are at most $2^{\ker_{\theta'}}$ such colourings.
  The result now follows by adding these upper bounds $2^{\ker_{\theta'}}$
  over all non-trivial reductions $\theta' \in \adm (\tri,3)$, and adding
  the $2^{v+\beta_1(\tri,\Z_2)-1}$ extra colourings with trivial reduction.

  Equation~(\ref{eq:short}) follows from the fact that 
  in every non-trivial colouring in $\adm (\tri,3)$ there must be at least one
  edge coloured $1/2$ and thus $\ker_{\theta'}$ can be at most the number of 
  edges minus one. 

  It follows that for $\beta_1 (\tri, \Z_2)$ or $v$ 
  sufficiently large this bound cannot be tight. For $1$-vertex $\Z_2$-homology
  spheres this bound is sharp as explained in Corollary~\ref{cor:nocol} below.
  Looking at all $1$-vertex triangulations with $\beta_1 (\tri, \Z_2)=1$ up to 
  six tetrahedra, the cases of equality in Inequality~(\ref{eq:short}) are 
  summarised in Table~\ref{tab:small}. See Table~\ref{tab:tv41} for a large 
  number of cases of equality for Inequality~(\ref{eq:long}). 
\qed \end{proof}

\begin{table}[t]
  \begin{center}
    \begin{tabular}{|c|r|r|r|r|r|r|r|}
    \hline
    $(n,\beta_1)$&$\#$ trig.&$\#$ sharp~(\ref{eq:long})&$(r-1)^{n+v}$&$\#$ tree&
    Eqn.~(\ref{eq:short})&Eqn.~(\ref{eq:long})&$\overline{| \adm (\tri, 4) |}$\\
    \hline
    \hline
    $(1,1)$&$1$&$ 1$&$9$&$12.0$&$4$&$4$&$4$ \\
    \hline
    \hline 

    $(2,1)$&$ 5 $&$ 5$&$ 37.8 $&$ 33.0$&$ 10.4 $&$ 6.0 $&$ 6.0 $ \\
    \hline
    $(2,2)$&$ 1 $&$ 1$&$ 27.0 $&$ 39.0$&$ 16.0 $&$ 10.0 $&$ 10.0 $ \\
    \hline
    \hline 

    $(3,1)$&$ 27 $&$ 14$&$ 99.0 $&$ 46.4$&$ 14.7 $&$ 7.7 $&$ 6.3 $ \\
    \hline
    $(3,2)$&$ 3 $&$ 1$&$ 81.0 $&$ 69.0$&$ 28.0 $&$ 14.7 $&$ 11.3 $ \\
    \hline
    \hline 

    $(4,1)$&$ 205 $&$ 67$&$ 378.1$&$ 75.2$&$ 42.9$&$ 13.1$&$ 8.7$ \\
    \hline
    $(4,2)$&$ 19 $&$ 4$&$ 268.6$&$ 110.1$&$ 61.5$&$ 21.8$&$ 15.3$\\
    \hline
    \hline 

    $(5,1)$&$ 1858 $&$ 261$&$ 1131.6$&$ 93.1$&$ 85.5$&$ 20.4$&$ 9.3$ \\
    \hline
    $(5,2)$&$ 184 $&$ 10$&$ 887.5$&$ 159.2$&$ 138.7$&$ 35.8$&$ 18.6$ \\
    \hline
    \hline

    $(6,1)$&$ 21459 $&$ 1574$&$ 3644.8$&$ 120.4$&$ 195.3$&$ 34.6$&$ 10.7$ \\
    \hline 
    $(6,2)$&$ 2516 $&$ 47$&$ 2781.6$&$ 214.5$&$ 297.5$&$ 58.0$&$ 22.0$  \\
    \hline
    $(6,3)$&$ 34 $&$ 0$&$ 2187.0 $&$ 413.2$&$ 456.0 $&$ 94.5$&$ 41.4$  \\
    \hline
    \end{tabular}
    \medskip
    \caption{Analysis of the trivial bound ``$(r-1)^{n+v}$'' (fourth column), average number of
      nodes ``$\#$ tree'' of the search tree visited by the backtracking algorithm (fifth
      column), bound ``Eqn.~(\ref{eq:short})'' given by the Inequality~(\ref{eq:short}) (sixth column), bound ``Eqn.~(\ref{eq:long})'' given by Inequality~(\ref{eq:long}) (seventh column), and average number ``$\overline{| \adm (\tri, 4) |}$'' of 
      admissible colourings in $\adm (\tri, 4)$ (rightmost column).
      The second column lists the number of triangulations ``$\#$ trig.'' contained
      in the census of triangulated closed manifolds $\tri$ with $n$ tetrahedra
      and first Betti number $\beta_1 (\tri, \Z_2)$, the third column 
      lists the number ``$\#$ sharp~(\ref{eq:long})'' of triangulations satisfying equality in 
      Inequality~(\ref{eq:long}).}
    \label{tab:tv41}
  \end{center}
\end{table}


Given a triangulation $\tri$, the right hand side of 
Equation~(\ref{eq:long}) can be computed efficiently.
It turns out to be sharp for $1985$ out of all $26,312$ closed 
triangulated $3$-manifolds with positive first Betti number and up to $6$ 
tetrahedra. In addition, even the average number of colourings is fairly close 
to this bound. See Table~\ref{tab:tv41} for details comparing the first upper bound
to the actual number of colourings in the census. Furthermore, there are $46$ 
triangulations of $3$-manifolds with $\leq 6$ tetrahedra (and positive first 
Betti number) attaining equality in the often much larger right hand side
of Equation~(\ref{eq:short}). For 
details about these cases of equality see Table~\ref{tab:small}. 

We have seen that the number of colourings in $\adm (\tri,3)$ and 
$\adm (\tri,4)$ largely depend on (i) the number of tetrahedra, (ii) the number 
of vertices, and (iii) the first Betti number of $\tri$. Moreover, 
if $\theta \in \adm(\tri,r)$ then  $\theta \in \adm(\tri,r')$ for $r' \geq r$,
and colourings for a lower value of $r$ possibly give rise to an exponential 
number of colours for a higher value of $r$.

Hence, we can expect the number of admissible colourings in $1$-vertex
$\Z_2$-homology spheres to be smaller than in the generic case. Incidentally,
homology spheres are manifolds for which computing $\tv_{r,q}$ is of
particular interest in view of $3$-sphere recognition. For this reason
we have a closer look at this very important special case below.

\subsection{One-vertex $\Z_2$-homology spheres}

When talking about algorithms to compute $\tv_{r,q} (\tri)$ for some 
$3$-manifold triangulation $\tri$, the case of $\Z_2$-homology spheres with 
only one vertex is a special case of particular importance for several reasons.

\begin{enumerate}
  \item One of the most important tasks of $3$-manifold invariants is to 
    distinguish between some $3$-manifold triangulation $\tri$ and the 
    $3$-sphere. In many cases homology can be used to efficiently make this 
    distinction. Hence, this question is most interesting when homology fails,
    that is, when $\tri$ is a homology sphere. 
  \item All results about $\tv_{r,q}$ which apply to 
    $\Z_2$-homology spheres automatically carry through for the invariant 
    $\tv_{r,1}(\tri,[0])$ for arbitrary manifold triangulations $\tri$.
  \item There are powerful techniques available to turn a
    triangulation of a $\Z_2$-homology sphere with an arbitrary number of 
    vertices into a set of smaller triangulations, all with only one vertex, 
    see Section~\ref{sec:algo} for details. 
\end{enumerate}

In this section we take a closer look at $1$-vertex $\Z_2$-homology spheres
and in particular their number of admissible colourings.
One corollary of Proposition~\ref{prop:betav} is the following 
statement for $\Z_2$-homology spheres.

\begin{corollary}
  \label{cor:nocol}
  Let $\tri$ be a $1$-vertex $\Z_2$-homology sphere. Then 
  $| \adm(\tri,r) | = 1$ for $r \leq 4$.
\end{corollary}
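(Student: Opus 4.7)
The plan is a direct two-step deduction from the results already established, splitting on $r=3$ and $r=4$.

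First I would handle $r=3$ as a straightforward specialisation of Proposition~\ref{prop:betav}. Since $\tri$ is a $\Z_2$-homology sphere we have $\beta_1(\tri,\Z_2)=0$, and by hypothesis $v=1$. Substituting into the formula gives $|\adm(\tri,3)| = 2^{1+0-1} = 1$. The unique colouring must therefore be the trivial (all-zero) colouring, which is clearly admissible.

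For $r=4$ I would use either bound of Theorem~\ref{prop:req4}. Using Inequality~(\ref{eq:long}), the sum ranges over $\adm(\tri,3)\setminus\{\mathbf{0}\}$, which by the previous paragraph is empty; hence
\[
|\adm(\tri,4)| \;\leq\; 0 \,+\, 2^{v+\beta_1(\tri,\Z_2)-1} \;=\; 2^{0} \;=\; 1.
\]
On the other hand, the trivial colouring is always admissible, so $|\adm(\tri,4)|\geq 1$, and equality follows. (Equivalently, Inequality~(\ref{eq:short}) gives $(|\adm(\tri,3)|-1)(2^{n+v-1}+1)+1 = 1$, yielding the same conclusion.)

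There is essentially no obstacle: both cases reduce to plugging $v=1$ and $\beta_1(\tri,\Z_2)=0$ into earlier results and observing that the trivial colouring realises the lower bound. The only mild subtlety worth spelling out is that for $r=4$ one has to invoke the reduction map $\theta\mapsto\theta'$ from Proposition~\ref{prop:reduction}: every $\theta\in\adm(\tri,4)$ reduces to some element of $\adm(\tri,3)$, and since $\adm(\tri,3)=\{\mathbf{0}\}$ the reduction is forced to be trivial, so $\theta$ takes values in $\{0,1\}$; then the argument in the proof of Theorem~\ref{prop:req4} (restricted to the trivial-reduction case) counts exactly $2^{v+\beta_1(\tri,\Z_2)-1}=1$ such colouring.
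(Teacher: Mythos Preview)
Your proposal is correct and follows essentially the same approach as the paper: the case $r=3$ is obtained directly from Proposition~\ref{prop:betav} with $v=1$ and $\beta_1(\tri,\Z_2)=0$, and the case $r=4$ is obtained directly from Theorem~\ref{prop:req4}. Your write-up is in fact more detailed than the paper's, which simply cites these two results without spelling out the substitutions.
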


\begin{proof}
  For $r=3$ this is a direct consequence from Proposition~\ref{prop:betav},
  for $r=4$ this is a direct consequence from Theorem~\ref{prop:req4}.
%
%
\qed \end{proof}

In particular, $\Z_2$-homology spheres (including many lens spaces) 
can never be distinguished from the $3$-sphere by $\tv_{r,q}$, $r\leq 4$. 

\begin{proposition}
  \label{prop:genbound}
  Let $\tri$ be a $1$-vertex $n$-tetrahedron $\Z_2$-homology sphere. Then 
  for all $\theta \in \adm(\tri,r)$, all colours of $\theta$ are integers,
  and in particular
  $$ | \adm(\tri,r) | \leq \left \lfloor \frac{r}{2} \right \rfloor^{n+1}.$$
\end{proposition}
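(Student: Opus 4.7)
The plan is to combine Corollary~\ref{cor:nocol} with Proposition~\ref{prop:reduction} to force integrality, and then simply count integer colourings.

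First I would observe that by Corollary~\ref{cor:nocol}, since $\tri$ is a $1$-vertex $\Z_2$-homology sphere, $\adm(\tri,3) = \{ \mathbf{0} \}$, i.e.\ the trivial colouring is the only admissible $3$-colouring. Next I would invoke Proposition~\ref{prop:reduction}: for any $\theta \in \adm(\tri,r)$, the reduction $\theta'(e) = \theta(e) - \lfloor \theta(e) \rfloor$ lies in $\adm(\tri,3)$. Combining these two facts, $\theta'$ must be identically zero, which means $\theta(e) \in \Z$ for every edge $e$ of $\tri$.

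For the cardinality bound, recall from the discussion at the start of Section~\ref{sec:bounds} that an $n$-tetrahedron triangulation with $v$ vertices has $n + v$ edges; since $v = 1$ here, $\tri$ has exactly $n+1$ edges. Because every admissible colouring takes only integer values in $I = \{0, 1/2, 1, \ldots, (r-2)/2\}$, each edge can receive one of the values in $\{0, 1, 2, \ldots, \lfloor (r-2)/2 \rfloor\}$. This set has exactly $\lfloor (r-2)/2 \rfloor + 1 = \lfloor r/2 \rfloor$ elements (a trivial case check for $r \in \{3,4,5,6\}$ confirms this). Hence the total number of integer colourings of the edges, admissible or not, is bounded by $\lfloor r/2 \rfloor^{n+1}$, yielding the claimed upper bound on $|\adm(\tri,r)|$.

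There is no real obstacle: once the integrality of colours is established, the counting argument is immediate. The only subtle point is making sure the count $\lfloor r/2 \rfloor = \lfloor (r-2)/2 \rfloor + 1$ is handled correctly for both even and odd $r$, which follows from an elementary floor-function identity.
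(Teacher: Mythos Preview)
Your proof is correct and follows essentially the same approach as the paper: establish integrality of colours via the reduction map (Proposition~\ref{prop:reduction}) combined with the fact that $\adm(\tri,3)$ is trivial, then count integer values in $I$ over the $n+1$ edges. The only cosmetic difference is that the paper cites Proposition~\ref{prop:betav} directly while you invoke its immediate consequence Corollary~\ref{cor:nocol}, and your derivation of the range $\{0,1,\dots,\lfloor(r-2)/2\rfloor\}$ from the definition of $I$ is arguably cleaner than the paper's appeal to the triangle constraints.
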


\begin{proof}
  It follows from Propositions~\ref{prop:reduction} and \ref{prop:betav} that
  no admissible colouring of $\tri$ can contain half-integers. Furthermore,
  the edge colours on every triangle must sum to at most $r-2$ and satisfy the triangle
  inequality. It follows that all colours must be integers between $0$ and
  $\lfloor \frac{r-2}{2} \rfloor$. The statement now follows from the fact that
  $\tri$ has $n+1$ edges. 
\qed \end{proof}

\begin{corollary}
  \label{cor:even}
  Let $\tri$ be a $1$-vertex $n$-tetrahedron closed $3$-manifold triangulation. 
  Then all admissible colourings to compute $\tv_{r,1}(\tri,[0])$ contain 
  integer weights only.
\end{corollary}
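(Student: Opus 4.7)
The plan is to unwind the definitions and reduce the corollary to a simple cohomological observation about $1$-vertex triangulations. The key object is the reduction $\theta'$ defined in Proposition~\ref{prop:reduction}: an admissible colouring $\theta$ takes integer values on every edge precisely when $\theta'$ is the zero cochain, since by definition $\theta'(e) = \theta(e) - \lfloor \theta(e) \rfloor \in \{0, 1/2\}$.

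By the definition of $\tv_{r,1}(\tri,[0])$ given in Section~\ref{ssec:tvhom}, the sum is taken over admissible colourings $\theta \in \adm(\tri,r)$ whose reduction $\theta'$ represents the trivial class in $\Hom^1(\tri,\Z_2)$. I would therefore argue that for a $1$-vertex triangulation, every cocycle in the trivial class must itself be the zero cochain, and then conclude that every contributing $\theta$ has integer values on all edges.

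The technical ingredient is the count of $1$-coboundaries used inside the proof of Proposition~\ref{prop:betav}: the number of $1$-coboundaries equals $2^{v-1}$, corresponding to subsets of vertices of even cardinality. For $v = 1$ this gives $2^{0} = 1$, so the only $1$-coboundary is the zero cochain. Hence if $\theta'$ is a cocycle with $[\theta'] = [0] \in \Hom^1(\tri,\Z_2)$, i.e.\ $\theta'$ is a coboundary, then $\theta' \equiv 0$.

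Combining these steps, any $\theta \in \adm(\tri,r)$ whose reduction lies in $[0]$ satisfies $\theta(e) \in \Z$ for every edge $e$, which is the assertion. No step is a real obstacle; the only mild subtlety is making sure that the cohomological input used is exactly the one extracted in the proof of Proposition~\ref{prop:betav}, namely that the coboundary count is $2^{v-1}$, so that plugging in $v = 1$ forces triviality of $\theta'$.
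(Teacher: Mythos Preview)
Your argument is correct and follows essentially the same route as the paper: both observe that a colouring contributing to $\tv_{r,1}(\tri,[0])$ has reduction $\theta'$ equal to a $1$-coboundary, and that for a $1$-vertex triangulation the only $1$-coboundary is the zero cochain, forcing $\theta$ to be integer-valued. Your version merely makes the last step more explicit by invoking the $2^{v-1}$ count from Proposition~\ref{prop:betav} with $v=1$, whereas the paper states it directly.
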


\begin{proof}
  Let $\tri$ be a $1$-vertex triangulation. To compute $\tv_{r,1}(\tri,[0])$ 
  we only consider colourings $\theta$ with reductions $\theta'$ corresponding 
  to $1$-coboundaries ($1$-cocycles homologous to $0$). Because $\tri$ has only 
  one vertex, $\theta'$ must be the zero colouring and in particular no 
  half-integers can occur in $\theta$.
\qed \end{proof}

A similar statement for the case of special spines can be found in 
\cite[Remark 8.1.2.2]{matveev03-algms}.

The bound from Proposition~\ref{prop:genbound} cannot be sharp since 
not all triangle colourings $(a,b,c) \in \{ 0, 1, \ldots , \lfloor \frac{r-2}{2} 
\rfloor \}^3$ are admissible. However, for $5 \leq r \leq 7$ we have the 
following situation.

\begin{theorem}
  \label{prop:bounds}
  Let $\tri$ be a $1$-vertex $n$-tetrahedron $\Z_2$-homology $3$-sphere 
  triangulation, then 
  $$ | \adm (\tri,5) | \leq  2^{n}+1; \quad \quad 
     | \adm (\tri,6) |\leq  3^{n}+1; \quad \quad | \adm (\tri,7) | \leq  3^{n}+1.$$
  Moreover, all these upper bounds are sharp.
\end{theorem}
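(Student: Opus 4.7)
The plan is to combine Proposition~\ref{prop:genbound} with the classification of admissible tetrahedron colourings from Theorem~\ref{thm:mainresult}. By Proposition~\ref{prop:genbound}, any admissible colouring of a 1-vertex $\mathbb{Z}_2$-homology sphere takes only integer values, so $\theta(e) \in \{0, 1\}$ for $r = 5$ and $\theta(e) \in \{0, 1, 2\}$ for $r = 6, 7$, yielding the trivial bounds $2^{n+1}$ and $3^{n+1}$. We need to shave off a multiplicative factor of $\lfloor r/2 \rfloor$ (up to the additive $+1$).

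First I would enumerate the admissible triangle triples: for $r = 5$ these are $(0,0,0)$, the three permutations of $(0,1,1)$, and $(1,1,1)$; for $r = 6, 7$ the list also includes the permutations of $(0,2,2)$ and $(1,1,2)$, and for $r = 7$ the permutations of $(1,2,2)$ as well. Next I would use Theorem~\ref{thm:mainresult} to list the admissible tetrahedron colourings for each $r$: writing each local colouring as $aA + bB + cC + dD + pP$, the sharp bound $p(i+j) \leq r - 2 \leq 5$ together with the parity/triangle-inequality constraints forces the coefficients $(a, b, c, d, p, i, j)$ into a short explicit list. This gives a small catalogue of feasible tetrahedral local configurations whose structure can be read off directly.

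The heart of the argument is a counting step that isolates the trivial (zero) colouring, giving the additive $+1$, and bounds the non-trivial admissible colourings by $\lfloor r/2 \rfloor^n$. Here I would exploit the cohomological triviality of a 1-vertex $\mathbb{Z}_2$-homology sphere: since every edge is a loop at the unique vertex the 1-coboundary group $B^1$ is zero, and $H^1(\tri, \mathbb{Z}_2) = 0$ forces $Z^1 = B^1 = 0$, making the coboundary $\delta\colon C^1(\tri, \mathbb{Z}_2) \to C^2(\tri, \mathbb{Z}_2)$ injective. This injectivity, together with the local classification, shows that every non-trivial admissible colouring is uniquely determined by an $n$-tuple of tetrahedral ``types'' from a set of size $\lfloor r/2 \rfloor$.

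The main obstacle will be the case analysis verifying that the local types combine consistently across shared triangles into exactly $\lfloor r/2 \rfloor^n$ non-trivial global colourings, neither more nor (for sharpness) fewer. Once the upper bound is established, sharpness is shown by exhibiting, for each $r \in \{5, 6, 7\}$, an explicit 1-vertex $\mathbb{Z}_2$-homology sphere triangulation whose admissible colourings are counted directly and match the bound; natural candidates are minimal triangulations of small lens spaces $L(p, q)$ with $p$ odd and of the Poincar\'e homology sphere, drawn from the standard census.
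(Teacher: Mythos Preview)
Your proposal has a genuine gap at the decisive step. You assert that ``every non-trivial admissible colouring is uniquely determined by an $n$-tuple of tetrahedral `types' from a set of size $\lfloor r/2 \rfloor$'', but you never say what these types are, why there are exactly $\lfloor r/2\rfloor$ of them, or why the assignment $\theta\mapsto(\text{type of }t_1,\ldots,\text{type of }t_n)$ is injective. The classification of Theorem~\ref{thm:mainresult} produces far more than two or three local configurations per tetrahedron even after restricting to integer colours, and tetrahedra share edges, so an arbitrary $n$-tuple of local types is generically inconsistent. The injectivity of $\delta\colon C^1(\tri,\Z_2)\to C^2(\tri,\Z_2)$ tells you that a $\Z_2$-valued $1$-cochain is determined by its coboundary, i.e.\ by data on the $2n$ triangles; it does not give a map to $\{1,\ldots,\lfloor r/2\rfloor\}^n$ indexed by tetrahedra, and it says nothing about integer-valued colourings for $r=6,7$ where colours lie in $\{0,1,2\}$. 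You also seem to expect the count to be \emph{exactly} $\lfloor r/2\rfloor^n$ non-trivial colourings; in fact equality is rare (cf.\ Table~\ref{tab:z2homspheres}), so sharpness must come from exhibiting specific examples, not from the counting argument itself.

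The paper's argument is entirely different and does not use Theorem~\ref{thm:mainresult} at all. For $r=5$ it pairs each non-constant $\{0,1\}$-colouring $\theta$ with its complement $\bar\theta$ (swap $0\leftrightarrow 1$) and observes that any non-constant admissible $\theta$ contains a triangle $(1,1,0)$, whence $\bar\theta$ contains $(0,0,1)$ and is inadmissible; together with the two constant colourings this gives $\tfrac12(2^{n+1}-2)+2=2^n+1$. For $r=6,7$ it uses the $\Z/3$-action $\theta\mapsto\theta+1\pmod 3$ on $\{0,1,2\}^{n+1}$: if a non-constant admissible $\theta$ had $\theta+1$ (resp.\ $\theta+2$) admissible, then all triangles of $\theta$ would be of a restricted list of types, and one reads off a non-trivial element of $\adm(\tri,3)$, contradicting Corollary~\ref{cor:nocol}. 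Hence each non-constant $\Z/3$-orbit contributes at most one admissible colouring, giving $(3^{n+1}-3)/3+2=3^n+1$. Sharpness is then witnessed by explicit small $3$-sphere triangulations from the census.
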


\begin{proof}
  For $r=5$ the admissible triangle colourings are  $(0,0,0)$, $(1/2,1/2,0)$,
  $(1,1,0)$, $(1,1/2,1/2)$, $(1,1,1)$, $(3/2,3/2,0)$, $(3/2,1,1/2)$, up to permutations.
  By Proposition~\ref{prop:reduction}, no colouring in $\adm (\tri,5)$ can
  contain an edge colour $1/2$ or $3/2$. To see this note that otherwise the
  reduction of such a colouring would be a non-trivial colouring in  
  $\adm (\tri,3)$, which does not exist (cf. Proposition~\ref{prop:betav} and
  Corollary~\ref{cor:nocol} with $v=1$ and $\beta_1(\tri,\Z_2) = 0$). Hence, all edge colours must be $0$ or $1$, 
  leaving triangle colourings $(0,0,0)$, $(1,1,0)$, and $(1,1,1)$.

  \medskip
  By an Euler characteristic argument, a $1$-vertex $n$-tetrahedron $3$-manifold has $n+1$ edges. Hence the
  number of colourings of $\tv_{5,q}$ is trivially bounded above by $2^{n+1}$.
  Furthermore, let $\theta \in \adm (\tri,5)$, then either $\theta$ is constant
  $0$ on the edges, constant $1$ on the edges, or $\theta$ contains a triangle 
  coloured $(1,1,0)$.
  In the last case, the complementary colouring $\theta'$, obtained by
  flipping the colour on all the edges, contains a triangle coloured $(0,0,1)$
  and thus $\theta' \not \in \adm (\tri,5)$. It follows that 
  $ | \adm (\tri,5) | \leq  2^{n}+1$.

  \bigskip
  For $r=6$ the admissible triangle colourings are the ones from the case $r=5$
  above plus $(3/2,3/2,1)$, $(2,1,1)$, $(2,2,0)$, $(2,3/2,1/2)$. Again, due to
  Proposition~\ref{prop:reduction}, no half-integers can occur in any colouring.
  Thus, the only admissible triangle colourings are $(0,0,0)$, $(1,1,0)$, 
  $(1,1,1)$, $(2,1,1)$, and $(2,2,0)$.

  We trivially have $| \adm (\tri,6) | \leq 3^{n+1}$. Let 
  $\theta \in \adm (\tri,6)$. We want to show, that at most a third of all
  non-constant assignment of colours $0$, $1$, $2$ to the edges of 
  $\tri$ can be admissible. For this, let $\theta \in \adm (\tri,6)$ and
  let $\theta'$ be
  defined by adding $1$ (mod $3$) to every edge colour. For $\theta'$ to be
  admissible, all triangles of $\theta$ must be of type $(0,0,0)$ and $(2,1,1)$.
  If at least one triangle has colouring $(0,0,0)$, $\theta$ must be the 
  trivial colouring. Hence, all triangles are of type $(2,1,1)$ in $\theta$.
  Replacing $2$ by $0$ and $1$ by $1/2$ in $\theta$ yields a non-trivial
  admissible colouring in $\adm(\tri,3)$, a contradiction by 
  Corollary~\ref{cor:nocol}. Hence, for every
  non-trivial admissible colouring $\theta$, the colouring $\theta'$ cannot
  be admissible. 

  Analogously, let $\theta''$ be defined by adding $2$ (mod $3$) to every edge 
  colour of $\theta$. For $\theta''$ to be admissible, all triangles of $\theta$
  must be of the type $(1,1,1)$, or $(2,2,0)$. A single triangle of type 
  $(1,1,1)$ in $\theta$ forces $\theta$ to be constant. Hence, all triangles
  must be of type $(2,2,0)$. Dividing $\theta$ by four defines a non-trivial
  colouring in $\adm(\tri,3)$, a contradiction. 

  Combining these observations, at most every third non-trivial assignment of 
  colours $0$, $1$, $2$ to the edges of $\theta$ can be admissible. Adding the
  two admissible constant colourings yields $| \adm (\tri,6) | \leq  3^{n}+1$.

  \bigskip
  The proof for $r=7$ follows from a slight adjustment of the proof for $r=6$.
  Admissible triangle colourings for colourings in $\adm(\tri,7)$
  are the ones from $r=6$ plus $(2,2,1)$. Again, we want to show that
  at most every third non-trivial assignment of colours $0$, $1$, $2$ to the
  edges of $\tri$ can be admissible. For this let $\theta \in \adm (\tri,7)$ and
  let $\theta'$ and $\theta''$ be defined as above. For $\theta'$ to be 
  admissible $\theta$ must consist of triangle colourings of type $(0,0,0)$,
  $(1,1,0)$ and $(2,1,1)$. Whenever $\theta$ is non-constant replacing
  $2$ by $0$, and $1$ by $1/2$ yields a non-trivial colouring in 
  $\adm (\tri,3)$ which is not possible. The argument for $\theta''$ is the
  same as in the case $r=6$. It follows that $| \adm (\tri,7) | \leq  3^{n}+1$.

  \bigskip
  All of the above bounds are attained by a number of small $3$-sphere 
  triangulations. See Table~\ref{tab:z2homspheres} for more details about
  $1$-vertex $\Z_2$-homology spheres with up to six tetrahedra and
  their average number of admissible colourings $| \adm (\tri,r) |$, 
  $5 \leq r \leq 7$.
\qed \end{proof}

  \begin{table}[t]
    \begin{center}
      {
      \begin{tabular}{|@{\hspace{0.05cm}}c@{\hspace{0.05cm}}|
        @{\hspace{0.05cm}}r@{\hspace{0.05cm}}|@{\hspace{0.05cm}}r@{\hspace{0.05cm}}||
        @{\hspace{0.05cm}}r@{\hspace{0.05cm}}|@{\hspace{0.05cm}}r@{\hspace{0.05cm}}|
        @{\hspace{0.05cm}}r@{\hspace{0.05cm}}||@{\hspace{0.05cm}}r@{\hspace{0.05cm}}|
        @{\hspace{0.05cm}}r@{\hspace{0.05cm}}|@{\hspace{0.05cm}}r@{\hspace{0.05cm}}||
        @{\hspace{0.05cm}}r@{\hspace{0.05cm}}|@{\hspace{0.05cm}}r@{\hspace{0.05cm}}|
        @{\hspace{0.05cm}}r@{\hspace{0.05cm}}|}
        \hline
        $n$&$\#$trig.&$\#$sharp&$(5\!-\!1)^{n+v}$&$2^n\!\!+\!\!1$&
        {\scriptsize$\overline{| \adm (\tri,\!5) |}$}&$(6\!-\!1)^{n+v}$&$3^n\!\!+\!\!1$&
        {\scriptsize$\overline{| \adm (\tri,\!6) |}$}&$(7\!-\!1)^{n+v}$&$3^n\!\!+\!\!1$&{\scriptsize$\overline{| \adm (\tri,\!7) |}$}\\
        \hline
        \hline
        $1$&$ 2 $&$ 1$&$ 16$&$3$&$ 2.50 $&$ 25$&$4 $&$ 3.00 $&$ 36$&$ 4$&$ 4.00 $ \\
        \hline
        $2$&$ 7 $&$ 3$&$ 64$&$5$&$ 4.00 $&$ 125$&$10 $&$ 6.86 $&$ 216$&$ 10$&$ 8.86 $ \\
        \hline
        $3$&$ 36 $&$ 5$&$ 256$&$9$&$ 5.61 $&$ 625$&$28 $&$ 12.22 $&$ 1,296$&$ 28$&$ 17.28 $ \\
        \hline
        $4$&$ 255 $&$ 14$&$ 1,024$&$17$&$ 8.31 $&$ 3,125$&$82 $&$ 23.46 $&$ 7,776$&$ 82$&$ 35.30 $ \\
        \hline
        $5$&$ 2305 $&$ 30$&$ 4,096$&$33$&$ 12.02 $&$ 15,625$&$244 $&$ 43.00 $&$ 46,656$&$ 244$&$ 70.44 $ \\
        \hline
        $6$&$ 24597 $&$ 89$&$ 16,384$&$65$&$ 17.71 $&$ 78,125$&$730 $&$ 80.15 $&$ 279,936$&$ 730$&$ 142.23 $ \\
        \hline
      \end{tabular} }
      \medskip
      \caption{
      Number ``$\#$sharp'' 
      of $1$-vertex $\Z_2$-homology spheres with $n$ tetrahedra, $1 \leq n \leq 6$, 
      satisfying equality in all bounds from Theorem~\ref{prop:bounds}
      (third column), and the average number ``$\overline{\adm (\tri, r)}$'' of 
      admissible colourings in $\adm (\tri, r)$, $5 \leq r \leq 7$ (columns 
      $6$, $9$, and $12$), compared to the naive upper bound ``$(r-1)^{v+n}$'' (columns $4$, $7$, and $10$) and the new upper bounds given by Theorem~\ref{prop:bounds} (columns $5$, $8$, and $11$).}
      \label{tab:z2homspheres}
    \end{center}
  \end{table}


There are $27,202$ $\Z_2$-homology spheres with $1$-vertex and up to $6$ tetrahedra.
Exactly $142$ of them attain equality in all three bounds. For more details 
about these cases of equality and the average number of colourings for 
$5 \leq r \leq 7$ in the census, see Table~\ref{tab:z2homspheres}.

Note that the sharp bounds from Theorem~\ref{prop:bounds} suggest that 
the over count of the general bound from Proposition~\ref{prop:genbound} is only 
linear in $r$.

%
%
%
%
%


\section{Faster ways to compute $\tv_{r,q}$}
\label{sec:algo}

In this section we describe an algorithm to compute
$\tv_{4,q}$---a problem known to be $\#P$-hard---exploiting the combinatorial
structure of the input triangulation. Moreover, we describe a significant 
exponential speed-up for computing $\tv_{r,1} (\tri)$ in the case where $r$ odd. 
However, before we can describe the new algorithms, we first have to briefly 
state some classical results about Turaev-Viro invariants.

\subsection{Classical results about Turaev-Viro invariants}
\label{app:algo}

Note that the Turaev-Viro invariants $\tv_{r,q}$ are closely related to 
the more general invariant of Witten and Reshetikhin-Turaev 
$\tau_{r,q}$ ($\in \mathbb{C}$), due to the following result.

\begin{theorem}[Turaev \cite{turaev10-book}, Roberts \cite{roberts95-skein}]
  \label{thm:wrt}
  For the invariants of Witten and Reshetikhin-Turaev $\tau_{r,q}$, 
  and the Turaev-Viro invariants 
  $$ \tv_{r,q} = \mid \tau_{r,q} \mid^2 $$
  holds.
\end{theorem}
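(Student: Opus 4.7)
The plan is to establish the identity by situating both $\tv_{r,q}$ and $\tau_{r,q}$ within a common framework built on a modular tensor category $\mathcal{C}_{r,q}$, from which the vertex, edge, triangle and tetrahedron weights of Section~\ref{app:tv-weights} all derive (as inverse global dimensions, quantum dimensions, theta networks and $6j$-symbols respectively). This is the strategy of Turaev and Roberts, and I would structure the argument around a $(2{+}1)$-dimensional TQFT interpretation of both invariants.

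First, I would express $\tau_{r,q}(M)$ through a Heegaard splitting $M = H_1 \cup_\Sigma H_2$: the TQFT functor associated to $\mathcal{C}_{r,q}$ assigns to $\Sigma$ a finite-dimensional Hermitian vector space $V(\Sigma)$ and to each handlebody $H_i$ a vector $v_{H_i} \in V(\Sigma)$, so that $\tau_{r,q}(M) = \langle v_{H_1}, v_{H_2} \rangle$. Under orientation reversal, $\overline{M}$ yields the complex-conjugate pairing, so it suffices to identify the TV state sum with the product $\langle v_{H_1}, v_{H_2} \rangle \cdot \overline{\langle v_{H_1}, v_{H_2} \rangle}$.

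Second, I would re-express the TV state sum $\tv_{r,q}(\tri) = \sum_{\theta \in \adm(\tri,r)} |\tri|_\theta$ combinatorially in the category $\mathcal{C}_{r,q}$: each tetrahedron contributes a $6j$-symbol, gluings across triangles produce $\theta$-network contractions, and the product of weights becomes the evaluation of a closed $\mathcal{C}_{r,q}$-graph supported on the dual $2$-skeleton (equivalently, Matveev's special spine) of $\tri$. Reorganising this graph evaluation using the Heegaard splitting induced by $\tri$ shows that the state sum equals $\mathrm{tr}(\rho_{H_1} \rho_{H_2}^\ast)$ for certain rank-one operators on $V(\Sigma)$, which collapses to $|\langle v_{H_1}, v_{H_2}\rangle|^2$. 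The hard part, and the main obstacle, is precisely this categorical bookkeeping: one must verify that summing $6j$-symbols over admissible colourings in $\adm(\tri,r)$ genuinely implements the doubled TQFT trace, which amounts to checking compatibility of the state sum with Pachner moves on $\tri$ on the one hand, and Kirby moves on a surgery presentation on the other. Once this identification is established, combining it with the invariance results of \cite{turaev92-invariants} and the fact that $\tau_{r,q}(\overline{M}) = \overline{\tau_{r,q}(M)}$ yields $\tv_{r,q}(M) = |\tau_{r,q}(M)|^2$, independently of the chosen presentation of $M$.
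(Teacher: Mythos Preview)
The paper does not prove Theorem~\ref{thm:wrt}; it is stated as a classical result and attributed directly to Turaev~\cite{turaev10-book} and Roberts~\cite{roberts95-skein}, with no proof environment following the statement. There is therefore no ``paper's own proof'' to compare against: the theorem serves purely as an imported black box enabling Theorems~\ref{thm:prod} and~\ref{thm:rodd}.

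Your sketch is a reasonable high-level summary of what one finds in those references, though it conflates the two somewhat. Roberts' argument~\cite{roberts95-skein} does not proceed via $6j$-symbols and categorical bookkeeping but rather via skein theory: he interprets the Turaev--Viro sum as the Kauffman bracket of a ``chain-mail'' link in $S^3$ associated to a Heegaard splitting, and then recognises this link as computing the Reshetikhin--Turaev invariant of $M \# \overline{M}$, whence $\tv_{r,q}(M)=\tau_{r,q}(M)\,\tau_{r,q}(\overline{M})=|\tau_{r,q}(M)|^2$. Turaev's approach~\cite{turaev10-book} is closer to what you wrote, identifying the state sum for a modular category $\mathcal{C}$ with the surgery invariant for $\mathcal{C}\boxtimes\overline{\mathcal{C}}$. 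Either way, the step you flag as ``the hard part'' (matching the state sum with the doubled TQFT trace) is indeed the entire content of the theorem, and your outline does not supply it beyond naming the ingredients; for the purposes of this paper that is appropriate, since the authors themselves simply cite the result.
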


Theorem~\ref{thm:wrt} enables us to translate a number of key results about
the Witten and Reshetikhin-Turaev invariants in terms of Turaev-Viro invariants.
Namely, the following statement holds.

\begin{theorem}[Based on Kirby and Melvin~\cite{kirby91-witten}]
  \label{thm:prod}
  Let $M$ and $N$ be closed compact $3$-manifolds, and let $r \geq 3$,
  $1 \leq q \leq r-1$. Then there exist $\gamma_r \in \mathbb{C}$, such that
  for $\tv_{r,1}' = \gamma_r \tv_{r,1}$ we have
  $$ \tv_{r,1}' (M \# N) = \tv_{r,1}' (M) \cdot \tv_{r,1}' (N). $$
\end{theorem}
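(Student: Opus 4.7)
The plan is to reduce the claim to the known multiplicativity of the Witten--Reshetikhin--Turaev invariant under connected sum via Theorem~\ref{thm:wrt}. Kirby and Melvin~\cite{kirby91-witten} establish, in essence by a surgery argument that cuts a separating $2$-sphere out of $M \# N$ and caps both sides off with $3$-balls, a relation of the form
\begin{equation*}
\tau_{r,1}(M \# N) \cdot \tau_{r,1}(S^3) = \tau_{r,1}(M) \cdot \tau_{r,1}(N).
\end{equation*}
My first step will be to state this identity explicitly from~\cite{kirby91-witten}, being careful about the normalisation convention used there (as is well known, the ``WRT invariant'' is defined up to a framing/normalisation factor, and multiplicativity under connected sum only holds once the $S^3$-factor is divided out).

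The next step is to apply Theorem~\ref{thm:wrt} to both sides. Taking absolute values squared in the Kirby--Melvin identity and using $\tv_{r,1} = |\tau_{r,1}|^2$ yields
\begin{equation*}
\tv_{r,1}(M \# N) \cdot \tv_{r,1}(S^3) = \tv_{r,1}(M) \cdot \tv_{r,1}(N).
\end{equation*}
I would then set $\gamma_r := 1 / \tv_{r,1}(S^3)$ and define $\tv_{r,1}' := \gamma_r \cdot \tv_{r,1}$. A direct substitution gives
\begin{equation*}
\tv_{r,1}'(M \# N) = \gamma_r \cdot \tv_{r,1}(M \# N) = \gamma_r^2 \cdot \tv_{r,1}(M) \cdot \tv_{r,1}(N) = \tv_{r,1}'(M) \cdot \tv_{r,1}'(N),
\end{equation*}
which is the desired multiplicativity.

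The main obstacle I anticipate is justifying that $\gamma_r$ is well defined, i.e. that $\tv_{r,1}(S^3) \neq 0$ for all $r \geq 3$. This amounts to computing (or quoting) an explicit formula for $\tv_{r,1}(S^3)$; from the WRT side one has $\tau_{r,1}(S^3) = (\zeta - \zeta^{-1})/(i \sqrt{2r})$ (or a similar expression, depending on conventions), whose modulus is nonzero for the admissible range of $(r,q)$. A secondary subtlety is reconciling the normalisation used to prove multiplicativity in~\cite{kirby91-witten} with the normalisation of $\tv_{r,q}$ chosen in Section~\ref{app:tv-weights}; any extra scalar factor that appears can be absorbed into the definition of $\gamma_r$ without affecting the structure of the argument. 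Finally, while the statement mentions $q$ in its hypotheses, only the case $q=1$ enters the conclusion, reflecting the fact that the Kirby--Melvin connected sum identity is usually stated at that specific level.
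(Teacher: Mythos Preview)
Your approach is exactly the one the paper has in mind: the paper does not give a standalone proof of this theorem but presents it as a direct translation of the Kirby--Melvin connected-sum identity for $\tau_{r,1}$ into the Turaev--Viro setting via Theorem~\ref{thm:wrt}, which is precisely the reduction you carry out. Your added care about the well-definedness of $\gamma_r$ (nonvanishing of $\tv_{r,1}(S^3)$) and about normalisation conventions goes slightly beyond what the paper spells out, but is entirely appropriate.
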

Additionally, when a manifold $M$ is represented by a triangulation with $n$ tetrahedra, the normalising factor $\gamma_r$ can be computed in polynomial time in $n$.

Using Turaev-Viro invariants at the trivial cohomology class we have the
following identity for odd degree $r$.

\begin{theorem}[Based on Kirby and Melvin~\cite{kirby91-witten}]
  \label{thm:rodd}
  Let $M$ be a closed compact $3$-manifold, and let $r\geq 3$ be an odd integer.
  Then $$ \tv_{r,1} (M) = \tv_{3,1} (M) \cdot \tv_{r,1}(M,[0]). $$
\end{theorem}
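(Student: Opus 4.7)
The plan is to reduce the statement to a known factorisation of the Reshetikhin-Turaev invariant $\tau_{r,1}$ for odd $r$, which is the main content of Kirby-Melvin~\cite{kirby91-witten}, and then translate it back into a Turaev-Viro identity using Theorem~\ref{thm:wrt}. Concretely, Kirby and Melvin establish (up to the explicit normalising factors recalled in Theorem~\ref{thm:prod}) that for $r$ odd,
\[
\tau_{r,1}(M) \;=\; \tau_{3,1}(M) \cdot \tau_{r,1}^{\,0}(M),
\]
where $\tau_{r,1}^{\,0}(M)$ is the restriction of the relevant state sum to integer-spin (equivalently $SO(3)$) colours. The heart of this factorisation is their decomposition of the $SU(2)$ level-$r$ quantum invariant as a product of the $SU(2)$ level-$3$ invariant and the $SO(3)$ level-$r$ invariant, valid precisely because $r$ is odd (so that integer and half-integer spins decouple).

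First I would identify $|\tau_{r,1}^{\,0}(M)|^2$ with $\tv_{r,1}(M,[0])$. By Proposition~\ref{prop:reduction} a colouring $\theta \in \adm(\tri,r)$ has trivial reduction $\theta' = [0]$ exactly when every colour of $\theta$ is an integer, i.e. exactly the ``integer-spin'' condition on the WRT side. Hence the combinatorial state sum $\tv_{r,1}(\tri,[0]) = \sum_{\theta \colon \theta' = [0]} |\tri|_\theta$ is the Turaev-Viro counterpart of $|\tau_{r,1}^{\,0}(M)|^2$; the equality $\tv_{r,1}(M,[0]) = |\tau_{r,1}^{\,0}(M)|^2$ then follows from a refined version of Theorem~\ref{thm:wrt} (as already observed by Turaev~\cite{turaev10-book}).

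Second I would take absolute value squared of the Kirby-Melvin factorisation and apply Theorem~\ref{thm:wrt} term by term:
\[
\tv_{r,1}(M) \;=\; |\tau_{r,1}(M)|^2 \;=\; |\tau_{3,1}(M)|^2 \cdot |\tau_{r,1}^{\,0}(M)|^2 \;=\; \tv_{3,1}(M) \cdot \tv_{r,1}(M,[0]),
\]
which is the desired identity.

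The main obstacle is the translation step between the WRT viewpoint, where the factorisation lives and is most naturally phrased in terms of Dehn-surgery presentations and quantum $\mathfrak{sl}_2$ representations, and the Turaev-Viro viewpoint via admissible colourings used throughout this paper. One has to match conventions and normalisation factors carefully so that (i) the Kirby-Melvin factorisation carries over to the Turaev-Viro side with no extraneous scalar, and (ii) the identification $\tv_{r,1}(M,[0]) = |\tau_{r,1}^{\,0}(M)|^2$ indeed matches the combinatorial refinement over $\Hom^1(\tri,\Z_2)$ defined in Section~\ref{ssec:tvhom}. Once this dictionary is set up the statement is immediate; all the deep content sits in~\cite{kirby91-witten}.
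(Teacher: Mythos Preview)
The paper does not supply its own proof of this theorem; it is stated as a classical result attributed to Kirby and Melvin~\cite{kirby91-witten} (together with the $\tv = |\tau|^2$ identity of Theorem~\ref{thm:wrt}) and is used as a black box in the algorithm of Section~\ref{sec:algo}. Your outline---take the Kirby--Melvin factorisation $\tau_{r,1} = \tau_{3,1}\cdot\tau_{r,1}^{0}$ for odd $r$, square absolute values via Theorem~\ref{thm:wrt}, and identify $|\tau_{r,1}^{0}|^2$ with the refined invariant $\tv_{r,1}(\,\cdot\,,[0])$---is exactly the intended derivation, so there is no divergence in approach to discuss.

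One point of care: your sentence ``$\theta' = [0]$ exactly when every colour of $\theta$ is an integer'' conflates the cochain $\theta'$ with its cohomology class. The condition $\theta'\in[0]$ means $\theta'$ is a $1$-\emph{coboundary}, not that $\theta'=0$; these coincide only for $1$-vertex triangulations (this is precisely Corollary~\ref{cor:even}). The identification of $\tv_{r,1}(M,[0])$ with the squared $SO(3)$ invariant $|\tau_{r,1}^{0}(M)|^2$ is still correct as an identity of manifold invariants, but it does not go through the ``all colours are integers'' description at the level of an arbitrary triangulation. Since the result is a statement about $M$, you may of course pass to a $1$-vertex triangulation first (as the paper does), but you should flag this rather than present the integer-colour condition as the definition of $[0]$.
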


\subsection{A structure-sensitive algorithm to compute $\adm(\tri,4)$}

The algorithm we present in this section is a direct consequence of
the proof of Theorem~\ref{prop:req4}.

\medskip
\noindent
{\bf Input}: A $v$-vertex $n$-tetrahedra triangulation of a closed $3$-manifold 
$\tri$ with set of edges $E$

\smallskip
\noindent
{\bf 1.}: Compute $\adm(\tri,3)$. Furthermore, for all 
$\theta \in \adm(\tri,3)$, enumerate the set of edges $\ker_\theta \subset E$ of
$\tri$ coloured zero in $\theta$.

\smallskip
\noindent
{\bf 2.}: For each non-trivial $\theta \in \adm(\tri,3)$, for each subset 
$A \setminus \ker_\theta$: Let $\theta'$ be the edge colouring that colours
(i) all edges in $A$ by $1$, (ii) all edges in ($E \setminus \ker_\theta$) by 
$1/2$, and (iii) all edges in ($\ker_\theta \setminus A$) by $0$.
For each non-trivial $\theta$, set up a backtracking procedure to check all such
$\theta'$ for admissibility. Add the admissible colourings $\theta'$ to 
$\adm (\tri,4)$.

\smallskip
\noindent
{\bf 3.}: For all colourings $\theta \in \adm(\tri,3)$, double all colours of
$\theta$ and add the result to $\adm (\tri,4)$.

\medskip
\noindent
{\bf Correctness of the algorithm and experiments:} 
%
Due to Theorem~\ref{prop:req4} we know that the above procedure enumerates all 
colourings in $\adm (\tri,4)$. Computing $\tv_{4,q}(\tri)$ thus runs in 
$$O \left ( \left ( \Sigma_{ \theta \in \adm(\tri,3) \setminus \{ 0 \}} 
   \,\, 2^{\ker_{\theta}} \right ) + 2^{v+\beta_1(\tri,\Z_2)-1} \right ) $$
arithmetic operations in $\Q[e^{i q \pi / 4}]$. This upper bound is much smaller
than the worst case running time $(r-1)^{n+v}$ of the
naive backtracking procedure. However, the backtracking algorithm typically
performs much better than this pathological upper bound.

Hence, one straightforward question to ask is (i) compared to the worst case
running time of the new algorithm, how many nodes of the full
search tree are actually visited by the naive backtracking algorithm, and (ii) 
how close is the worst case running time of the new algorithm to the actual 
number of admissible colourings
of typical inputs. To give a partial answer to this question we analyse the 
census of closed triangulations up to six tetrahedra. For every $v$-vertex, 
$n$-tetrahedra triangulation $\tri$ we compare the naive bound $3^{n+v}$, the 
size of the search tree traversed by the backtracking algorithm specific
to $\tri$, the 
improved general bound from Equation~(\ref{eq:short}),
the bound specific to $\tri$ from Equation~(\ref{eq:long}),
and the actual number of admissible colourings $| \adm (\tri , 4) |$. As a 
result we find that (i) the actual number of nodes visited by the backtracking 
algorithm is small but still significantly larger than the upper bound given in 
Equation~(\ref{eq:long}), and (ii) the right hand side of Equation~(\ref{eq:long}) 
is surprisingly close to $| \adm(\tri,4) |$. A summary containing the average 
values of the bounds over all triangulations with fixed number of tetrahedra and
$\Z_2$-Betti number can be found in Table~\ref{tab:tv41}.

\subsection{An algorithm to compute $\tv_{r,1}$, $r$ odd}

In this section we describe a significant exponential speed-up 
for computing $\tv_{r,1} (\tri)$ in the case where $r$ is odd and
$\tri$ does not contain any two-sided projective planes~\footnote{This is a technical pre-condition for the crushing procedure to succeed. Triangulations on which the crushing procedure fails are however extremely rare.}. The main ingredients 
for this speed-up are:

\begin{itemize}
  \item The crushing and expanding procedure for closed $3$-manifolds as 
    described by Burton, and Burton and Ozlen, which turns an arbitrary 
    $v$-vertex triangulation into a number of smaller $1$-vertex triangulations
    in polynomial time \cite{burton14-crushing-dcg,burton12-unknot};
  \item A classical result about Turaev-Viro invariants due to Turaev 
    \cite{turaev10-book}, Roberts \cite{roberts95-skein}, and Kirby and 
    Melvin~\cite{kirby91-witten} stating that there exist a scaled version 
    $\tv_{r,1}'= \gamma_r \tv_{r,1}$ which is multiplicative under taking 
    connected sums (see Theorem~\ref{thm:prod});
  \item Another classical result due to the same authors and publications
    stating that, for $r$ odd, we have $\tv_{r,1}(\tri) = \tv_{3,1} (\tri) \cdot 
    \tv_{r,1}(\tri,[0])$, and thus $\tv_{r,1}(\tri,[0])$ is essentially
    sufficient to
    compute $\tv_{r,1} (\tri)$ (see Theorem~\ref{thm:rodd});
  \item Corollary~\ref{cor:even} stating that computing $\tv_{r,1}(\tri,[0])$
    of a $1$-vertex closed $3$-manifold triangulation can be done by only 
    enumerating colourings with all integer colours.
\end{itemize}

\begin{figure}[t]
  \begin{center}
    \includegraphics[width=\textwidth]{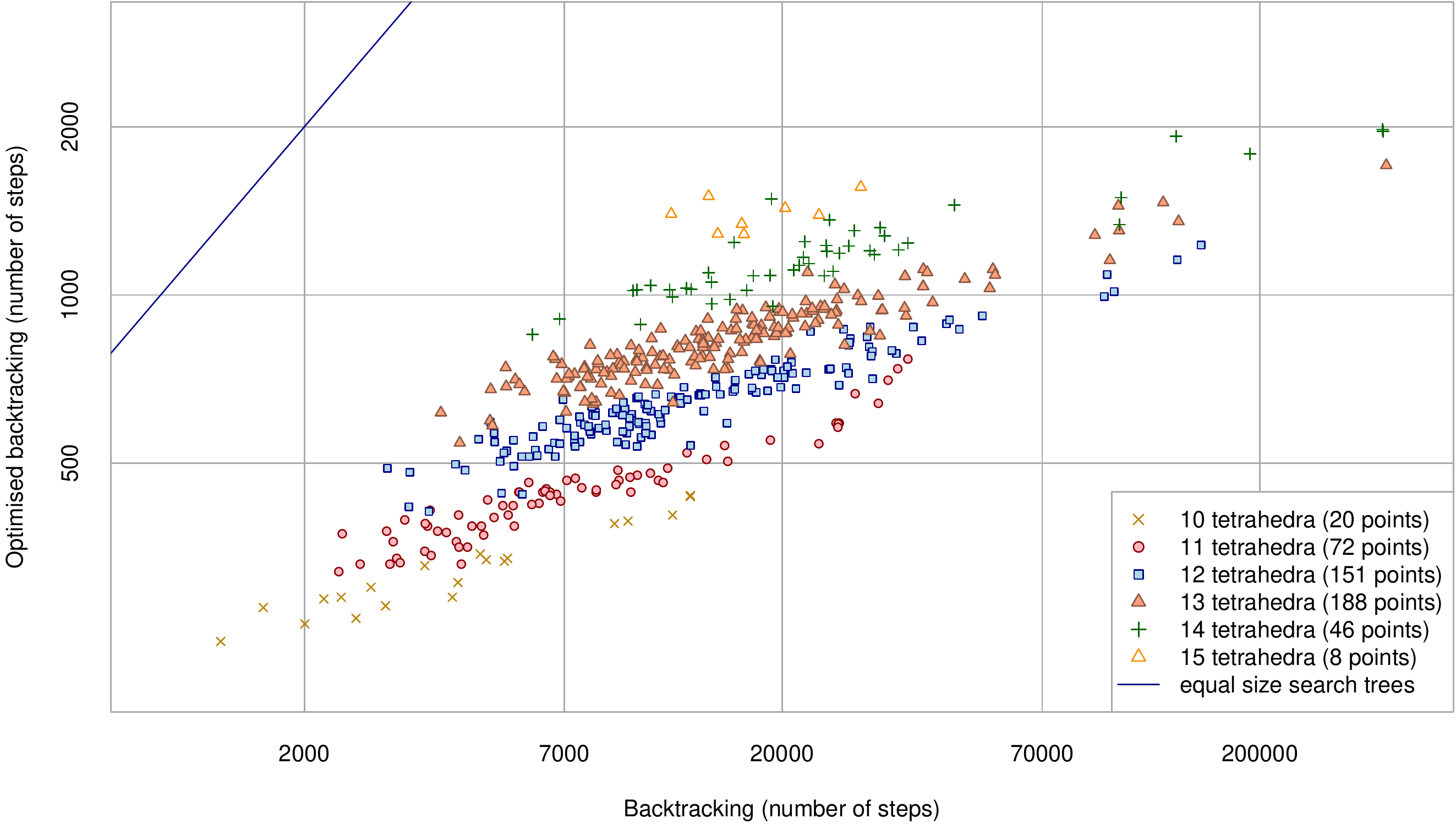}
  \end{center}
  \caption{Number of nodes in the search tree visited by the naive algorithm and the optimised backtracking procedure for the $500$ first $1$-vertex triangulations of the Hodgson-Weeks census.\label{fig:expts1}}
\end{figure}

\medskip
\noindent
{\bf Input}: A $v$-vertex $n$-tetrahedra triangulation of a closed $3$-manifold 
$\tri$

%
%

\smallskip
\noindent
{\bf 1.}: If $\tri$ has more than one vertex, apply the crushing and expanding
procedure to $\tri$ as described in \cite{burton14-crushing-dcg} and 
\cite{burton12-unknot} respectively. As a result
we obtain a number of triangulations $\tri_1, \ldots , \tri_m$, and a number of 
``removed components'' $c_1 , \ldots , c_{\ell}$ with the following properties.

\begin{itemize}
  \item Every triangulation $\tri_i$, $1 \leq i \leq m$, is a $1$-vertex 
    triangulation;
  \item If $n_i$ is the number of tetrahedra in $\tri_i$, then
    $\ell + \sum \limits_{i=0}^{m} n_i \leq n; $
  \item Every ``removed component'' is either a $3$-sphere, the lens space 
    $L(3,1)$, or the real projective space $\mathbb{R}P^3$. Note that 
    homology calculations can distinguish between all of these pieces in polynomial time;
  \item We have for the topological type of the $\tri_i$ that $\tri$ is the \emph{connected sum}\footnote{Building the connected sum $M \# N$ of two manifolds $M$ and $N$ simply consists of removing a small ball from $M$ and $N$ respectively, and glue them together along their newly created boundaries.}:
    \begin{equation}
      \label{eq:connsum}
      \tri \cong \tri_1 \# \ldots \# \tri_m \# c_1 \# \ldots \# c_{\ell}.
    \end{equation}
\end{itemize}

If $\tri$ contains a two-sided projective plane the crushing procedure will 
detect this fact and the computation is cancelled. The total running time of
this step is polynomial.

\smallskip
\noindent
{\bf 2.}: Compute $\tv_{r,1}(\tri_i,[0])$, $1 \leq i \leq m$. 

\smallskip
\noindent
{\bf 3.}: For all $\tri_i$, compute $\tv_{3,1} (\tri_i)$, and use 
Theorem~\ref{thm:rodd} to obtain $\tv_{r,1} (\tri_i)$. The Turaev-Viro
invariants of $S^3$, $\mathbb{R}P^3$, and $L(3,1)$ are well known (see Sokolov 
\cite{Sokolov97LensSpacesTVInv}) and the respective values for the $c_i$ can 
efficiently be pre-computed. If one of the components $c_i$ is a 
real projective space, return $0$, as $\tv_{r,1} (\mathbb{R}P^3) = 0$ for all
$r \geq 3$, odd.

\smallskip
\noindent
{\bf 4.}: Scale all values from the previous step to $\tv_{r,1}'$, multiply them
and re-scale the product. The result equals $\tv_{r,1}(\tri)$, by 
Theorem~\ref{thm:prod} and Equation~(\ref{eq:connsum}).

%
%

\begin{figure}[t]
  \begin{center}
    \includegraphics[width=\textwidth]{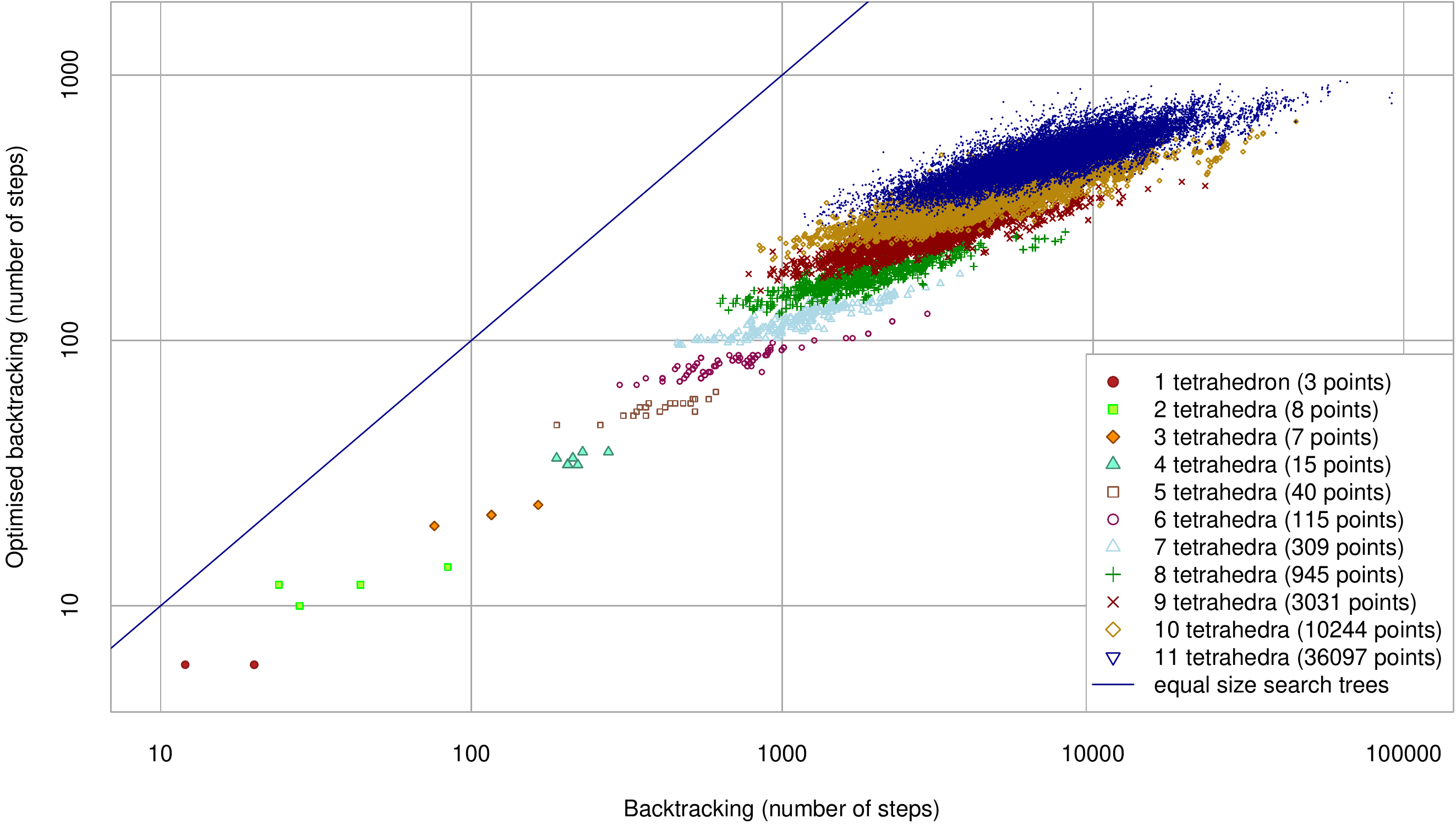}
  \end{center}
  \caption{Number of nodes in the search tree visited by the optimised backtracking procedure over the naive algorithm for the $50,814$ $1$-vertex minimal closed triangulations.
  \label{fig:expts2}}
\end{figure}

\medskip
\noindent
{\bf Running time, efficiency and effectiveness of the proposed algorithm:}
%
The simplifying step, the crushing and expanding procedure, and computing
$\tv_{3,1} (\tri)$ are all polynomial time algorithms 
\cite{burton14-crushing-dcg,burton12-unknot}. Following Corollary~\ref{cor:even}
and Proposition~\ref{prop:genbound} the running time to compute 
$\tv_{r,1}(\tri_i,[0])$ is $O(\left \lfloor r/2 \right \rfloor^{n_i+1} )$ 
(remember, $\tri_1$ is a $1$-vertex triangulation). The overall running time is 
thus $O(\left \lfloor r/2 \right \rfloor^{n+1})$.
The same procedure can be applied to improve the fixed parameter tractable algorithm as presented 
in \cite{Burton15TuraevViro}---which is also based on enumerating colourings---to get the running time $O (n \lfloor r/2 \rfloor^{6(k+1)} k^2 \log r)$, where $k$ is the treewidth of the dual graph of $\tri$.


To compare the performance of the naive backtracking with 
the proposed algorithm, we count the number of nodes in the search
tree visited by both algorithms for computing $\tv_{5,1}$ (i) for the first $500$ triangulations of the
Hodgson-Weeks census, with $10 \leq n \leq 15$, \cite{hodgson94-closedhypcensus}, and (ii) for all 
triangulations with $\leq 11$ tetrahedra in the census of closed minimal 
triangulations \cite{burton11-genus}, see Figure~\ref{fig:expts1} and
\ref{fig:expts2}. These triangulations all have $1$ vertex, and the improvement is solely due to the reduction of the space of colourings studied in this section (in particular, the crushing step is not applied). 
Improvements in the minimal triangulations census range from factors $2$ to 
$117$. Improvements in the Hodgson-Weeks census, which contains larger 
triangulations, range from factors $5.6$ to  $215$. Observe how the range of
improvements rapidly grows larger as the size of the triangulations increase.

As evidence for the effectiveness of the algorithm, we analyse the ability of 
$\tv_{r,1}$, $r \in \{3,5,7,9\}$, to distinguish $3$-manifolds from the 
$3$-sphere $S^3$. Since homology can be computed in polynomial time, we only 
consider $\Z$-homology spheres, i.e., $3$-manifolds with the homology groups of 
$S^3$. There are $36$ distinct $\Z$-homology spheres of {\em complexity} 
at most $11$, meaning, they can be triangulated with $11$ tetrahedra or less. 
Due to Corollary~\ref{cor:nocol} we already know that none of them can be 
distinguished from $S^3$ by $\tv_{3,1}$ (note that $\Z$-homology spheres are 
always $\Z_2$-homology spheres). $\tv_{5,1}$, $\tv_{7,1}$, and $\tv_{9,1}$,
distinguish $22$, $31$, and $30$ of them from $S^3$. Furthermore, a combination 
of $\tv_{5,1}$ and $\tv_{7,1}$ only fails once, and a combination of all three 
invariants never fails to distinguish $\Z$-homology spheres of complexity
$\leq 11$ from $S^3$. See Tables~\ref{tab:Zhomsp} and \ref{tab:appZhomsp} for 
details.

\begin{table}[t]
  \begin{center}
    \begin{tabular}{|r|c|c|c|c|c|c|c|c|}
    \hline
    $n$&$\tv_{3,1}$&$\tv_{4,1}$&$\tv_{5,1}$&$\tv_{6,1}$&$\tv_{7,1}$&$\tv_{8,1}$&$\tv_{9,1}$&$\tv_{5,1}$ and $\tv_{7,1}$ \\
    \hline
    \hline
    $5$& 0/1& 0/1& {\bf 1/1}& 0/1& {\bf 1/1}& 1/1& 1/1& {\bf 1/1} \\
    \hline
    $7$& 0/1& 0/1& {\bf 1/1}& 0/1& {\bf 1/1}& 0/1& 1/1& {\bf 1/1} \\
    \hline
    $8$& 0/3& 0/3& {\bf 1/3}& 0/3& {\bf 3/3}& 3/3& 3/3& {\bf 3/3} \\
    \hline
    $9$& 0/4& 0/4& {\bf 3/4}& 0/4& {\bf 3/4}& 1/4& 3/4& {\bf 4/4} \\
    \hline
    $10$& 0/8& 0/8& {\bf 5/8}& 0/8& {\bf 7/8}& 3/8& 6/8& {\bf 8/8} \\
    \hline
    $11$& 0/19& 0/19& {\bf 11/19}& 0/19& {\bf 16/19}& 13/19& 16/19& {\bf 18/19} \\
    \hline
    \end{tabular}
    \medskip
    \caption{Summary of the ability of $\tv_{r,1}$, $3 \leq r \leq 9$, to 
      distinguish $\Z$-homology spheres up to complexity $11$ from the 
      $3$-sphere. X/Y denotes the success rate, i.e., there are Y 
      manifolds, X of which can be distinguished from the $3$-sphere by the 
      respective invariant.}
    \label{tab:Zhomsp}
  \end{center}
\end{table}

\begin{table}
  \begin{center}
    \begin{tabular}{|r|l|@{}c@{}|@{}c@{}|@{}c@{}|@{}c@{}|@{}c@{}|@{}c@{}|@{}c@{}|}
      \hline
%
%
%
%
%

      $n$&top. type&$\tv_{3,1}$&$\tv_{4,1}$&$\tv_{5,1}$&$\tv_{6,1}$&$\tv_{7,1}$&$\tv_{8,1}$&$\tv_{9,1}$ \\ 
      \hline
      \hline 
      $5$&$\Sigma(2,3,5)$&$0$&$0$&$1$&$0$&$1$&$1$&$1$\\ 
      \hline
      \hline

      $7$&$\Sigma(2,3,7)$&$0$&$0$&$1$&$0$&$1$&$0$&$1$\\ 
      \hline
      \hline

      $8$&$\Sigma(2,3,11)$&$0$&$0$&$0$&$0$&$1$&$1$&$1$\\ 
      \hline
      &$\Sigma(3,4,5)$&$0$&$0$&$0$&$0$&$1$&$1$&$1$\\ 
      \hline
      &$\operatorname{SFS} [D: (2,1) (3,1)] \cup_{0,1 | 1,0} \operatorname{SFS} [D: (2,1) (3,2)]$&$0$&$0$&$1$&$0$&$1$&$1$&$1$\\ 
      \hline
      \hline

      $9$&$\Sigma(2,3,13)$&$0$&$0$&$1$&$0$&$0$&$1$&$1$\\ 
      \hline
      &$\Sigma(2,3,17)$&$0$&$0$&$1$&$0$&$1$&$0$&$0$\\ 
      \hline
      &$\Sigma(2,5,7)$&$0$&$0$&$1$&$0$&$1$&$0$&$1$\\ 
      \hline
      &$\Sigma(3,4,7)$&$0$&$0$&$0$&$0$&$1$&$0$&$1$\\ 
      \hline
      \hline

      $10$&$\Sigma(2,3,19)$&$0$&$0$&$0$&$0$&$1$&$1$&$0$\\ 
      \hline
      &$\Sigma(2,3,23)$&$0$&$0$&$1$&$0$&$1$&$0$&$1$\\ 
      \hline
      &$\Sigma(2,7,9)$&$0$&$0$&$0$&$0$&$1$&$0$&$1$\\ 
      \hline
      &$\Sigma(3,5,11)$&$0$&$0$&$0$&$0$&$1$&$0$&$1$\\ 
      \hline
      &$\Sigma(3,7,8)$&$0$&$0$&$1$&$0$&$0$&$0$&$0$\\ 
      \hline
      &$\operatorname{SFS} [D: (2,1) (3,1)] \cup_{0,1 | 1,0} \operatorname{SFS} [D: (2,1) (7,5)]$&$0$&$0$&$1$&$0$&$1$&$0$&$1$\\ 
      \hline
      &$\operatorname{SFS} [D: (2,1) (5,2)] \cup_{0,1 | 1,0} \operatorname{SFS} [D: (2,1) (5,3)]$&$0$&$0$&$1$&$0$&$1$&$1$&$1$\\ 
      \hline
      &$\operatorname{Hyp}1.39850888$&$0$&$0$&$1$&$0$&$1$&$1$&$1$\\ 
      \hline
      \hline

      $11$&$\Sigma(2,3,25)$&$0$&$0$&$1$&$0$&$1$&$0$&$1$\\ 
      \hline
      &$\Sigma(2,3,29)$&$0$&$0$&$0$&$0$&$0$&$1$&$1$\\ 
      \hline
      &$\Sigma(2,5,13)$&$0$&$0$&$1$&$0$&$0$&$1$&$1$\\ 
      \hline
      &$\Sigma(2,5,17)$&$0$&$0$&$1$&$0$&$1$&$0$&$0$\\ 
      \hline
      &$\Sigma(2,9,11)$&$0$&$0$&$0$&$0$&$1$&$0$&$1$\\ 
      \hline
      &$\Sigma(3,4,17)$&$0$&$0$&$0$&$0$&$1$&$0$&$0$\\ 
      \hline
      &$\Sigma(3,4,19)$&$0$&$0$&$0$&$0$&$1$&$1$&$0$\\ 
      \hline
      &$\Sigma(3,7,13)$&$0$&$0$&$1$&$0$&$0$&$0$&$1$\\ 
      \hline
      &$\operatorname{SFS} [D: (2,1) (3,1)] \cup_{-1,2 | 0,1} \operatorname{SFS} [D: (3,2) (5,3)]$&$0$&$0$&$0$&$0$&$1$&$1$&$1$\\ 
      \hline
      &$\operatorname{SFS} [D: (2,1) (3,2)] \cup_{-2,3 | -1,2} \operatorname{SFS} [D: (2,1) (5,2)]$&$0$&$0$&$0$&$0$&$1$&$1$&$1$\\ 
      \hline
      &$\operatorname{SFS} [D: (2,1) (3,2)] \cup_{0,1 | 1,0} \operatorname{SFS} [D: (2,1) (11,4)]$&$0$&$0$&$0$&$0$&$1$&$1$&$1$\\ 
      \hline
      &$\operatorname{SFS} [D: (2,1) (3,2)] \cup_{0,1 | 1,0} \operatorname{SFS} [D: (3,2) (5,1)]$&$0$&$0$&$1$&$0$&$1$&$1$&$1$\\ 
      \hline
      &$\operatorname{SFS} [D: (2,1) (3,2)] \cup_{0,1 | 1,0} \operatorname{SFS} [D: (4,1) (5,3)]$&$0$&$0$&$1$&$0$&$1$&$1$&$1$\\ 
      \hline
      &$\operatorname{SFS} [D: (2,1) (5,3)] \cup_{0,1 | 1,0} \operatorname{SFS} [D: (3,2) (4,1)]$&$0$&$0$&$0$&$0$&$1$&$1$&$1$\\ 
      \hline
      &$\operatorname{SFS} [D: (2,1) (3,1)] \cup_{-1,1 | 1,0} \operatorname{SFS} [A: (2,1)] \cup_{0,1 | 1,1} $&$0$&$0$&$1$&$0$&$1$&$1$&$1$\\
      &$\operatorname{SFS} [D: (2,1) (3,2)]$&&&&&&&\\
      \hline
      &$\operatorname{Hyp}1.73198278$&$0$&$0$&$1$&$0$&$1$&$0$&$1$\\ 
      \hline
      &$\operatorname{Hyp}1.91221025$&$0$&$0$&$1$&$0$&$1$&$1$&$1$\\ 
      \hline
      &$\operatorname{Hyp}2.22671790$&$0$&$0$&$1$&$0$&$1$&$1$&$1$\\ 
      \hline
      &$\operatorname{Hyp}2.25976713$&$0$&$0$&$1$&$0$&$1$&$1$&$1$ \\
      \hline
    \end{tabular}
    \medskip
    \caption{$\Z$-homology $3$-spheres up to complexity $11$ and the ability
      of $\tv_{r,1}$, $r\leq 9$ to distinguish them from the $3$-sphere ($0 =$ no
      distinction, $1 =$ distinction).
      \label{tab:appZhomsp}}
  \end{center}
\end{table}

\bibliographystyle{plain}
\bibliography{pure}

\begin{thebibliography}{10}

\bibitem{Bachmann12Helical}
David Bachman.
\newblock Normalizing topologically minimal surfaces ii: Disks.
\newblock {\em arXiv:1210.4574}, 2012.

\bibitem{birman1975braids}
Joan~S. Birman.
\newblock {\em Braids, links, and mapping class groups}.
\newblock Annals of mathematics studies. Princeton University Press, 1975.

\bibitem{burton11-genus}
Benjamin~A. Burton.
\newblock Detecting genus in vertex links for the fast enumeration of
  3-manifold triangulations.
\newblock In {\em Proceedings of {ISSAC}}, pages 59--66. ACM, 2011.

\bibitem{burton14-crushing-dcg}
Benjamin~A. Burton.
\newblock A new approach to crushing 3-manifold triangulations.
\newblock {\em Discrete Comput. Geom.}, 52(1):116--139, 2014.

\bibitem{regina}
Benjamin~A. Burton, Ryan Budney, Will Pettersson, et~al.
\newblock Regina: Software for 3-manifold topology and normal surface theory.
\newblock \texttt{http://\allowbreak regina.\allowbreak sourceforge.\allowbreak
  net/}, 1999--2014.

\bibitem{Burton15TuraevViro}
Benjamin~A. Burton, Cl\'ement Maria, and Jonathan Spreer.
\newblock {Algorithms and complexity for Turaev-Viro invariants}.
\newblock In {\em Proceedings of ICALP 2015}, pages 281--293. Springer, 2015.

\bibitem{burton12-unknot}
Benjamin~A. Burton and Melih Ozlen.
\newblock A fast branching algorithm for unknot recognition with experimental
  polynomial-time behaviour.
\newblock {\em arXiv:1211.1079}, 2012.

\bibitem{haken61-knot}
Wolfgang Haken.
\newblock Theorie der Normalfl{\"a}chen.
\newblock {\em Acta Math.}, 105:245--375, 1961.

\bibitem{hass99-knotnp}
Joel Hass, Jeffrey~C. Lagarias, and Nicholas Pippenger.
\newblock The computational complexity of knot and link problems.
\newblock {\em J. Assoc. Comput. Mach.}, 46(2):185--211, 1999.

\bibitem{hatcher02-algebraic}
Allen Hatcher.
\newblock {\em Algebraic Topology}.
\newblock Cambridge University Press, Cambridge, 2002.
\newblock \texttt{http://\allowbreak www.\allowbreak math.\allowbreak
  cornell.\allowbreak edu/\allowbreak \~{}hatcher/\allowbreak AT/\allowbreak
  ATpage.\allowbreak html}.

\bibitem{hodgson94-closedhypcensus}
Craig~D. Hodgson and Jeffrey~R. Weeks.
\newblock Symmetries, isometries and length spectra of closed hyperbolic
  three-manifolds.
\newblock {\em Experiment. Math.}, 3(4):261--274, 1994.

\bibitem{kirby91-witten}
Robion Kirby and Paul Melvin.
\newblock The {$3$}-manifold invariants of {W}itten and {R}eshetikhin-{T}uraev
  for {${\rm sl}(2,{\bf C})$}.
\newblock {\em Invent. Math.}, 105(3):473--545, 1991.

\bibitem{kirby04-nphard}
Robion Kirby and Paul Melvin.
\newblock Local surgery formulas for quantum invariants and the {A}rf
  invariant.
\newblock {\em Geom. Topol. Monogr.}, pages (7):213--233, 2004.

\bibitem{kleiner08-perelman}
Bruce Kleiner and John Lott.
\newblock Notes on {P}erelman's papers.
\newblock {\em Geom. Topol.}, 12(5):2587--2855, 2008.

\bibitem{matveev03-algms}
Sergei Matveev.
\newblock {\em Algorithmic Topology and Classification of 3-Manifolds}.
\newblock Number~9 in Algorithms and Computation in Mathematics. Springer,
  Berlin, 2003.

\bibitem{recogniser}
Sergei Matveev et~al.
\newblock Manifold recognizer.
\newblock \texttt{http://\allowbreak www.\allowbreak matlas.\allowbreak
  math.\allowbreak csu.\allowbreak ru/\allowbreak ?page=\allowbreak
  recognizer}, accessed August 2012.

\bibitem{roberts95-skein}
Justin Roberts.
\newblock Skein theory and {T}uraev-{V}iro invariants.
\newblock {\em Topology}, 34(4):771--787, 1995.

\bibitem{rubinstein97-3sphere}
J.~Hyam Rubinstein.
\newblock Polyhedral minimal surfaces, {H}eegaard splittings and decision
  problems for {$3$}-di\-men\-sion\-al manifolds.
\newblock In {\em Geometric Topology}, volume~2 of {\em AMS/IP Stud. Adv.
  Math.}, pages 1--20. Amer. Math. Soc., 1997.

\bibitem{Sokolov97LensSpacesTVInv}
M.~V. Sokolov.
\newblock Which lens spaces are distinguished by {T}uraev-{V}iro invariants?
\newblock {\em {Mathematical Notes}}, 61(3):468--470, 1997.

\bibitem{Stillwell93Undecidability}
John Stillwell.
\newblock {\em Classical topology and combinatorial group theory}, volume~72 of
  {\em Graduate Texts in Mathematics}.
\newblock Springer-Verlag, New York, second edition, 1993.

\bibitem{turaev10-book}
Vladimir~G. Turaev.
\newblock {\em Quantum Invariants of Knots and 3-Manifolds}, volume~18 of {\em
  de Gruyter Studies in Mathematics}.
\newblock Walter de Gruyter \& Co., Berlin, revised edition, 2010.

\bibitem{turaev92-invariants}
Vladimir~G. Turaev and Oleg~Y. Viro.
\newblock State sum invariants of {$3$}-manifolds and quantum {$6j$}-symbols.
\newblock {\em Topology}, 31(4):865--902, 1992.

\end{thebibliography}


\end{document}